\newcommand{\msgfont}[1]{\textsc{#1}}
\newcommand{\msg}[2]{$\langle$\msgfont{#1}, $#2\rangle$}
\newcommand{\syscaps}{Granular Synchrony}
\newcommand{\gpscaps}{Granular Partial Synchrony}
\newcommand{\gascaps}{Granular Asynchrony}
\newcommand{\sys}{granular synchrony}
\newcommand{\gps}{granular partial synchrony}
\newcommand{\gas}{granular asynchrony}
\newtheorem{definition}{Definition}
\newtheorem{theorem}{Theorem}
\newtheorem{lemma}{Lemma}
\title{\syscaps}
\date{}
\author[1]{Neil Giridharan}
\author[2]{Ittai Abraham}
\author[1]{Natacha Crooks}
\author[3]{Kartik Nayak}
\author[4]{Ling Ren}
\affil[1]{UC Berkeley}
\affil[2]{Intel Labs}
\affil[3]{Duke University}
\affil[4]{University of Illinois Urbana-Champaign}
\begin{document}
\maketitle

\begin{abstract}
    Today's mainstream network timing models for distributed computing are synchrony, partial synchrony, and asynchrony. These models are coarse-grained and often make either too strong or too weak assumptions about the network. This paper introduces a new timing model called granular synchrony that models the network as a mixture of synchronous, partially synchronous, and asynchronous communication links. 
    The new model is not only theoretically interesting but also more representative of real-world networks. 
    It also serves as a unifying framework where current mainstream models are its special cases.
    We present necessary and sufficient conditions for solving crash and Byzantine fault-tolerant consensus in granular synchrony. 
    Interestingly, consensus among $n$ parties can be achieved against $f \geq n/2$ crash faults or $f \geq n/3$ Byzantine faults without resorting to full synchrony. 
\end{abstract}

\section{Introduction}
A fundamental aspect of any distributed computation is the \textit{timing model}. 
There are three mainstream timing models: synchrony, asynchrony, and partial synchrony.
Under synchrony, messages arrive before a known upper bound $\Delta$. Under asynchrony, messages arrive in any finite amount of time. With partial synchrony~\cite{dls}, there is an unknown but finite Global Stabilization Time (GST), and the network is asynchronous before GST and synchronous afterwards. 

The synchrony model is arguably a rosy reality: even a single message that takes longer than $\Delta$ to arrive is a violation of the synchrony model (forcing us to consider either the sender or recipient to be faulty). 
On the other hand, the asynchrony model is extremely pessimistic, making it challenging, or even impossible, to design protocols in it. 
The most well-known example may be the FLP impossibility~\cite{flp}, which states that any consensus protocol that can tolerate even a single crash fault in asynchrony must have an infinite execution. This implies that deterministic consensus in asynchrony is impossible.
The partial synchrony model tries to balance asynchrony and synchrony and has been the most widely adopted in practice so far.
But it is close to asynchrony in essence and shares the same fault tolerance bounds as (randomized) asynchronous protocols.

This paper argues that the current characterization of network timings is too coarse-grained.
We recognize the variability and heterogeneity of modern networks and propose that they should be modeled in a \textit{granular manner} via a graph consisting of a mixture of synchronous, partially synchronous, and asynchronous links.
We call the new model \emph{\sys}. 

Our new model is more than yet another theoretical construct. 
It is rooted in and motivated by our understanding and characterizations of modern distributed systems and networks.
Modern distributed systems increasingly span datacenters, be it for disaster recovery or fault isolation~\cite{swift-paxos, wpaxos, epaxos}. Within datacenters, networks are mostly synchronous~\cite{synchronous-data-center}. Spikes in message delays do occur~\cite{saksham}, but such spikes are rare and almost never happen to the entire datacenter~\cite{cloudy}. 
Across datacenters and over the Internet, networks are mostly well-behaved but are susceptible to significant fluctuations~\cite{latency-variation-internet} and adversarial attacks~\cite{bgp-hijack}.

The \sys{} timing model can serve as a unifying framework for network timing models. 
Synchrony, partial synchrony, and asynchrony are all extreme cases of it.
Outside these extreme cases, the \sys{} model is a natural intermediate between synchrony and partial synchrony (or asynchrony) and gives rise to new results that can be construed as an intermediate between fundamental results in distributed computing.  

For concreteness, we focus on the problem of fault-tolerant consensus~\cite{byzantinegenerals} in this paper.
It is well-known that under synchrony, the agreement variant of consensus can be solved in the presence of $f<n$ crash faults or $f<n/2$ Byzantine faults (assuming digital signatures). 
With partial synchrony, fewer faults can be tolerated: $f<n/2$ crash faults or $f<n/3$ Byzantine faults~\cite{dls}.
Asynchrony has the same fault thresholds and further requires the use of randomization~\cite{flp}.

We derive necessary and sufficient conditions for solving crash fault-tolerant (CFT) and Byzantine fault-tolerant (BFT) consensus in \sys.
A key benefit and interesting implication of the \sys{} model is that we do \emph{not} have to assume full synchrony to tolerate $f\geq n/2$ crash faults or $f\geq n/3$ Byzantine faults.
Instead, consensus can be reached if and only if the underlying communication graph satisfies certain conditions. 

We remark that all our protocols are \emph{graph-agnostic}, meaning they do not need to know the synchronicity property of any link.
As a result, our protocols can work in the following alternative formulation of the \sys{} model.
The consensus algorithm is parameterized by $n$ and $f$. Initially, all communication links are synchronous. 
The adversary has the power to corrupt $f$ nodes and alter some links to be partially synchronous or asynchronous but must not violate the necessary condition for the given $n$ and $f$.   
On the other hand, most of our impossibility proofs rule out algorithms that know the graph and are tailored for the graph. 
This strengthens both our protocols and our impossibility results. 

We will consider two variants of the  \sys\ model.
The first variant only has synchronous and partially synchronous links (no asynchronous links), and we refer to it as \emph{\gps}.
CFT consensus in \gps{} can be solved if and only if any quorum of $n-f$ nodes collectively can communicate synchronously with at least $f+1$ nodes despite faulty nodes. 
BFT consensus in \gps{} can be solved if and only if any set of $n-2f$ correct nodes can communicate synchronously with at least $f+1$ correct nodes despite faulty nodes.
Our CFT protocol in \gps{} relies on this condition to guarantee intersection between two quorums of size $n-f$, a crucial property for many classic consensus protocols. Without the identified condition, two quorums of size $n-f$ may not intersect. Leveraging this condition, we can expand a quorum of size $n-f$ to $f+1$ after some bounded delay. This larger quorum of size $f+1$ is guaranteed to intersect with the other quorum of size $n-f$. We use a similar argument to show that two quorums of $n-2f$ correct nodes intersect in BFT. 


The second variant further allows asynchronous links, and we refer to it as \emph{\gas}. 
For CFT consensus to be solved deterministically in \gas, it is additionally required that after removing all asynchronous edges and all crashed nodes, less than $n-f$ nodes are outside the largest connected component of the remaining graph. 
For undirected graphs, this condition is weaker than the 
correct $\diamond f$-source condition in~\cite{marcos-minimal} (see \S\ref{s:marcos-minimal}) and establishes the minimum synchrony condition needed to circumvent the FLP impossibility~\cite{flp}.  
For BFT consensus to be solved deterministically in \gas\ by a \emph{graph-agnostic} algorithm, it is additionally required that there is a correct node with partially synchronous paths to at least $f$ other correct nodes.
Our \gas{} protocols rely on these conditions to ensure that eventually a correct leader will be able to make progress without a quorum of $n-f$ nodes initiating a view change.
We leave the necessary and sufficient condition for BFT algorithms that know the graph as an open question.

\section{Model and Definitions}\label{s:defs}

We assume communication links are bi-directional. 
In \gps, each link can be either synchronous or partially synchronous.
In \gas, each link can be synchronous, partially synchronous, or asynchronous.
A synchronous link delivers each message sent on the link within a known upper bound $\Delta$.
A partially synchronous link respects the $\Delta$ message delivery bound after GST. 
An asynchronous link has no delay bound and just has to deliver each message eventually. 
We assume all communication links are reliable and FIFO (first-in-first-out), and deliver each transmitted message exactly once.

Beyond this, the model is the same as traditional consensus literature. 
There are $n$ nodes in total.
The adversary can corrupt up to $f$ nodes and can do so at any time during the protocol execution (i.e., the adversary is adaptive). 
In the CFT case, faulty nodes can fail by crashing only. In the BFT case, faulty nodes can behave arbitrarily and can be coordinated by the adversary.
For BFT, we further assume the existence of digital signatures and public-key infrastructure (PKI) and that faulty nodes cannot break cryptographic primitives. 
A message is only considered valid by correct nodes if its accompanying signature is verified (we omit writing these signature operations in the protocols).

Our protocols do not require any form of clock synchronization among nodes, and instead just require bounded clock skews.
To elaborate, certain steps of our protocols require nodes to wait for some amount of time (e.g., $4\Delta$).
For simplicity, our protocol description assumes each node will wait for precisely that amount of time.
But it is not hard to see that our protocols still work if each node waits for a time that falls in a known bounded range (e.g., between $4\Delta$ and $5\Delta$), which is easy to achieve with bounded clock skews.   

It is convenient to describe the network as an undirected graph $G=(V, E)$.
Each vertex represents a node, and each edge represents a communication link.
We use vertex and node interchangeably, and edge and link interchangeably. 
Our protocols are graph agnostic: they do not assume knowledge of the graph.

\begin{definition}[Synchronous path]
Node $a$ has a synchronous path to node $b$, written as $a \rightarrow b$, if there exist a sequence of synchronous edges $(a,i_1),(i_1,i_2),,\ldots,(i_k,b)$ where every intermediate node $i_j$ is correct.
\end{definition}

Note that in the above definition, only intermediate nodes need to be correct. Therefore, every node, even a faulty one, has a synchronous path to itself, i.e., $a \rightarrow a, \forall a \in V$. 
We generalize the notion of synchronous 
paths from two nodes to two sets of nodes $A$ and $B$.



\begin{definition}
$A \rightarrow B$ if $\forall b \in B, \exists a \in A$ such that $a \rightarrow b$.
\end{definition}



\begin{definition}[Path length, distance and diameter]
The length of a path is the number of edges in it.
If $a \rightarrow b$, the synchronous distance between these two nodes is the length of the shortest synchronous path between them. 
The synchronous diameter of a graph $G$ is
\[d(G) \coloneqq \max_{\substack{F,a,b~\text{s.t.}~|F| \leq f,~a\rightarrow b}} d(a,b).\] 
\end{definition}

Partially synchronous path, path length, distance, and diameter $d'(G)$ are similarly defined.
Note that a partially synchronous path can contain synchronous edges. 

The (partially) synchronous distance is only defined for a pair of nodes that have a (partially) synchronous path between them. 
We also remark that for the Byzantine case, distance is only defined for a pair of correct nodes. 
The $\max$ in the diameter definition is taken over all pairs with the corresponding distance defined. 
The two diameters capture the worst-case round-trip delays among nodes connected by synchronous and partially synchronous paths, respectively. 
If $d(G)$ or $d'(G)$ is known, they can be directly used in our protocols; otherwise, $|V|-1$ is a trivial upper bound. 
We will simply write $d$ and $d'$ when there is no ambiguity.  

\begin{definition}[Consensus] In a consensus protocol, every node has an initial input value and must decide a value that satisfies the following properties.   
\begin{itemize}
    \item \textit{Agreement:} No two correct nodes decide different values.\footnote{For CFT consensus, we actually achieve the stronger property of uniform agreement, which states that no two nodes (even faulty ones) decide differently.}
    \item \textit{Termination:} Every correct node eventually decides.
    \item \textit{Validity:} If all nodes have the same input value, then that is the decision value.
    %
\end{itemize}
\label{def:consensus}
\end{definition}

\section{CFT Consensus in \gpscaps}

\begin{theorem}\label{thm:cft-gps}
Under \gps, CFT consensus on a graph $G=(V, E)$ is solvable if and only if, regardless of which up to $f$ nodes are faulty,
$\forall A \subseteq V$ with $|A| \geq n-f$, $\exists B \subseteq V$ with $|B| \geq f+1$ such that $A \rightarrow B$. 
\end{theorem}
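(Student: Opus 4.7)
The proof will have two directions. For \emph{sufficiency}, I plan to design a graph-agnostic, leader-based protocol in the style of PBFT or Paxos, with one key modification: after a leader collects $n-f$ votes on a proposal and forms a certificate, the leader (and every correct voter) relays the certificate for an additional $d\cdot\Delta$ time, where $d$ is a (possibly conservatively chosen) upper bound on the synchronous diameter and is taken to be $|V|-1$ if unknown. The hypothesis applied to the voting quorum $A$ guarantees that this relay reaches some set $B$ with $|B|\ge f+1$, and every correct node in $B$ will \emph{lock} on the value. Because $(f+1)+(n-f) > n$, any subsequent quorum of $n-f$ nodes collected by a future leader during a view change must intersect $B$; the new leader therefore learns of the lock and is forced to re-propose the same value, giving safety. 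Liveness follows the standard partial-synchrony argument: once GST passes and an honest leader is elected, its partially synchronous links deliver within $\Delta$ and the protocol progresses.

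For \emph{necessity}, I plan a DLS-style indistinguishability argument. Suppose the condition fails, witnessed by a faulty set $F$ with $|F|\le f$ and a set $A$ with $|A|\ge n-f$ whose synchronously reachable set $B$ (avoiding $F$ as intermediaries) satisfies $|B|\le f$. Since $a\rightarrow a$ always, $A\subseteq B$, forcing $|A|\le f$ and hence $n\le 2f$. The first step is to strengthen the witness to the clean case $|A|=n-f$ and $F=V\setminus A$; restricting the allowed intermediaries can only shrink the reachable set, so this substitution preserves the failure and additionally implies that \emph{no synchronous edge crosses the cut} $(A,V\setminus A)$ (any such edge would itself be a length-one synchronous path). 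With this partition in hand, I construct three executions: (i) $V\setminus A$ crashes at the start and $A$ has input $0$, forcing $A$ to decide $0$ by validity and termination; (ii) $A$ crashes at the start and $V\setminus A$ has input $1$, forcing a decision of $1$; (iii) no crashes, $A$ holds $0$ and $V\setminus A$ holds $1$, with all cross-cut messages (which, by the cut property, ride only on partially synchronous links) delayed until after the termination deadlines of (i) and (ii). Execution (iii) is indistinguishable from (i) to nodes in $A$ and from (ii) to nodes in $V\setminus A$, yielding conflicting decisions in (iii) and contradicting agreement.

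I expect the main obstacle to be the reduction step at the start of necessity: cleanly handling witnesses where $F$ intersects $A$ nontrivially, and then justifying that replacing $F$ with $V\setminus A$ preserves the failure (this needs a careful argument that shrinking the correct set only shrinks the reachable set). A secondary difficulty, on the sufficiency side, is making the relay-and-lock mechanism work without knowledge of which links are synchronous: the view-change timeout must dominate the $d\cdot\Delta$ propagation so that a commit is never pre-empted before its certificate finishes spreading, while still ensuring liveness after GST.
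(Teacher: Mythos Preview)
Your sufficiency sketch is close to the paper's in spirit: both hinge on using the graph condition to ``promote'' an $(n-f)$-quorum to an $(f+1)$-set via $d\Delta$ of synchronous flooding, then invoke ordinary quorum intersection. The paper places the flooding step inside the view change (upon receiving \textsc{NewView}, a node echoes its lock and waits $2d\Delta$ before entering the new view), whereas you place it after voting. The paper's placement sidesteps the race you flag between certificate propagation and view-change, so when you work out the details you may find it cleaner to move the relay there. One small slip: the flooding must be done by \emph{every} correct node, not just the voters, since the synchronous paths guaranteed by $A\rightarrow B$ go through arbitrary correct intermediaries.

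The necessity argument has a genuine gap. After shrinking to $|A|=n-f$ and setting $F=V\setminus A$, you assert that no synchronous edge crosses $(A,V\setminus A)$, arguing that any such edge is a length-one synchronous path. But a length-one path $a\rightarrow b$ with $b\notin A$ only puts $b$ into the reachable set; it does not contradict $|B|\le f$, because when $n<2f$ the reachable set $B$ is allowed to be strictly larger than $A$. Concretely, take $n=5$, $f=3$, $A=\{1,2\}$, the single synchronous edge $(1,3)$, and $F=\{3,4,5\}$: the reachable set is $\{1,2,3\}$ of size $3=f$, the condition fails, yet the synchronous edge $(1,3)$ crosses your cut and cannot be delayed in execution (iii). (Your reduction does preserve the \emph{failure}---with $F'=V\setminus A$ the reachable set is contained in $A\cup N(A)\subseteq B$---but that is weaker than what execution (iii) needs.) The paper repairs this with a three-way partition: let $B$ be the reachable set minus $A$ and $C=V\setminus(A\cup B)$. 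One checks $|B\cup C|\le f$ and $|A\cup B|\le f$, there is no synchronous edge between $A$ and $C$ (any such edge would put its $C$-endpoint into the reachable set), and in execution (iii) one crashes $B$ so that $A$ and $C$ are synchronously disconnected. You will need some version of this buffer set to make the indistinguishability go through.
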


In words, the condition is that any set $A$ of size at least $n-f$ has a potentially larger set $B$ of size at least $f+1$, such that for any node $b \in B$ there exits $a\in A$ and a synchronous path from $a$ to $b$. Intuitively, if a message arrives at all of $A$, then it will arrive at all of $B$ after some delay.

It is worth noting that classic crash fault tolerance bounds are special cases of our theorem. 
For example, when all links are synchronous, any node has synchronous paths to all $n$ nodes. 
Thus, synchronous CFT consensus can be solved for any $n \geq f+1$. 
At the other extreme, $n=2f+1$ is the smallest value of $n$ for which the condition in Theorem~\ref{thm:cft-gps} trivially holds even when all edges are partially synchronous (see necessity proof). 
The more interesting part of our theorem is of course when we have a mix of synchronous and partially synchronous edges.
Figure \ref{fig:cft-necessary-figure} gives examples of these intermediate cases where CFT consensus is solvable with $f+1<n\leq 2f$. 

\begin{figure}
    \centering
    \subfloat[]{\includegraphics[width=2cm,height=3cm,keepaspectratio]{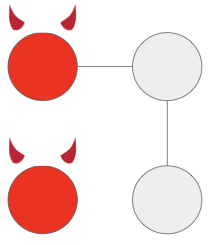}}
    \qquad
    \subfloat[]{\includegraphics[width=2cm,height=3cm,keepaspectratio]{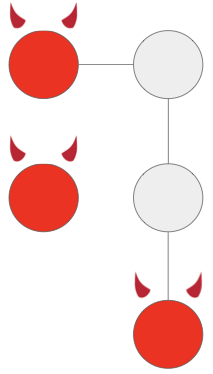}}
    \qquad
    \subfloat[]{\includegraphics[width=2cm,height=3cm,keepaspectratio]{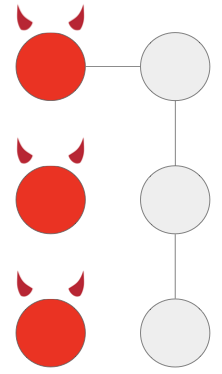}}
    \qquad

    \caption{Only synchronous links are shown in the figure for brevity. Faulty nodes are denoted in red with horns, and the correct nodes are denoted in gray. 
    The figure shows the necessary and sufficient condition in theorem \ref{thm:cft-gps} being satisfied for
    \textbf{(a)} $n=4$, $f=2$,  \textbf{(b)} $n=5$, $f=3$, and \textbf{(c)} $n=6$, $f=3$.}
    \label{fig:cft-necessary-figure}
\end{figure}

\subsection{Necessity}\label{s:cft-gps}

We first prove the ``only if'' part of Theorem~\ref{thm:cft-gps}. The proof is similar to the DLS proof in partial synchrony~\cite{dls}. 
To ensure agreement, we must ensure that nodes cannot be partitioned into two disjoint groups with no synchronous inter-group links. The condition in Theorem~\ref{thm:cft-gps} ensures exactly that.

\begin{proof}
For $n\geq 2f+1$, the ``only if'' part of the theorem is vacuous because the condition trivially holds: $n-f \geq f+1$, and every node has a synchronous path to itself.

For $n\leq 2f$, we prove by contradiction. 
Suppose there is an algorithm that solves consensus on a graph $G$ that does not satisfy the condition in the theorem. 
Then, there exists a set $F$ of up to $f$ nodes such that, if nodes in $F$ crash, there exists a set $A$ of at least $n-f$ nodes, which collectively have synchronous paths to at most $f$ nodes. 
Let $B$ be the set of these $f$ nodes excluding $A$. 
Let $C$ be the remaining nodes, i.e., $C=[n] \setminus \{A \cup B\}$. 
Note that $\{A,B,C\}$ is a three-way disjoint partition of the $n$ nodes.
Also note that $|A \cup B| \leq f$ and $|B \cup C| = n-|A| \leq f$.
Next, we consider three executions. 

In execution 1, all nodes have input $v_1$ and 
nodes in $B \cup C$ crash at the beginning.
Since $|B \cup C| \leq f$, $A$ eventually decides $v_1$ in time $t_1$ due to validity.
In execution 2, all nodes have input $v_2 \neq v_1$ and 
nodes in $A \cup B$ crash at the beginning. 
Since $|A \cup B| \leq f$, $C$ eventually decides $v_2$ in time $t_2$ due to validity.

In execution 3, nodes in $A$ have input $v_1$, nodes in $C$ have input $v_2$, nodes in $B$ crash at the beginning, and $GST > \max(t_1, t_2)$.
Note that crashing $B$ (instead of $F$) does not change the fact that $A$ has synchronous paths to $A \cup B$ only.
This is because, with $B$ crashed, nodes in $F \setminus B$ do not have synchronous paths to $A$ themselves (otherwise, they would have synchronous paths to $A$ with $F$ crashed).
Thus, synchronous paths from $A$ to $C$ cannot go through $F \setminus B$. 
Because there are no synchronous edges between $A$ and $C$, the adversary can delay the delivery of all messages between $A$ and $C$ until after GST.
Thus, $A$ cannot distinguish execution 3 from execution 1 and $C$ cannot distinguish execution 3 from execution 2. 
Then, $A$ decides $v_1$ and $C$ decides $v_2$, violating agreement.
\end{proof}

\subsection{Protocol}

Next, we present a new CFT consensus protocol assuming the condition in theorem~\ref{thm:cft-gps} holds. This establishes the sufficiency of the condition. 

\par \underline{\textit{Overview.}} A natural starting point is a standard quorum-based partially synchronous CFT consensus protocol.
Such protocols require $n>2f$ to ensure any two quorums of size $n-f$ intersect. 
When $n \leq 2f$, two quorums of $n-f$ may not intersect. 
But when the condition in theorem \ref{thm:cft-gps} holds, a quorum of $n-f$ nodes can hear from $f+1$ nodes of any critical information in bounded time.
This effectively promotes a quorum of size $n-f$ to $f+1$ and ensures safety as a quorum of size $f+1$ always intersects a quorum of size $n-f$. 

Similar to other leader-based partially synchronous consensus protocols, our protocol operates in a series of views, where each view has a leader. The leader of view $v$ is denoted as $L_v$. 
Leaders can be elected using a simple round-robin order. If a view after GST has a correct leader, nodes will commit that leader's proposal and terminate. There is a view change procedure to replace a leader who is not making progress. We focus on a single-shot consensus here, but the protocol can be easily adapted to the multi-shot setting.

\par \underline{\textit{Locks.}} A $lock\coloneqq (view, value)$ consists of a view and value. Initially, each node locks on its input value with view number $0$.
When a node receives a proposal from the leader of the current view, it updates its lock to the current view and the proposed value. 
Locks are ranked by view numbers. Note that except for the initial view $0$, there cannot be two locks with the same view number but different values, since only one value is proposed per view. 
Locks from view 0 can be ranked arbitrarily. 

\begin{algorithm}[tb]
\caption{CFT consensus protocol in \gps{} for node $i$}\label{alg:gps-cft}
\begin{algorithmic}[1]
\State{$v_i \gets 0$}\Comment{Initialize local view number}
\State{$lock \gets (0, input_i)$ }    \Comment{Initially lock on the input value}
\State{\textbf{enter} view 1}    

\medskip
\Event{entering view $v$}
\State{$v_i \gets v$}
\State{\textbf{start} $view\_timer \gets timer(4\Delta)$}\Comment{Timer for changing view}
\State{\textbf{send} \msg{Status}{v,lock} to $L_v$}\label{send-status}
\EndEvent

\medskip
\Event{receiving $n-f$ \msg{Status}{v_i,-} and $i=L_{v_i}$}
\State{$val \gets$ value from the highest lock (by view) received}\label{highest-rank}
\State{\textbf{send} \msg{Propose}{v_i,val} to all}\label{propose-send}\Comment{Leader proposal}
\EndEvent


\medskip
\Event{receiving \msg{Propose}{v_i,val}}
\State{$lock \gets (v_i,val)$}\label{update-lock} 
\State{\textbf{send} \msg{Vote}{v_i,val} to all}\label{vote-send}
\EndEvent


\medskip
\Event{receiving $n-f$ \msg{Vote}{v_i,val} or \msg{Commit}{val}}
\State{\textbf{send} \msg{Commit}{val} to all}\label{send-commit}
\State{\textbf{commit} $val$ and \textbf{terminate}} 
\EndEvent



\medskip
\Event{$view\_timer$ expiring}
\State{\textbf{send} \msg{NewView}{v_i+1} to all}\label{blame}
\EndEvent

\medskip
\Event{receiving \msg{NewView}{v} where $v>v_i$}
\State{\textbf{echo} \msg{NewView}{v} and to all}
\State{\textbf{send} \msg{Locked}{lock} to all}
\State{\textbf{stop} accepting \textsc{Propose} messages in views up to $v-1$}
\State{\textbf{wait} $2d\Delta$ time}
\State{\textbf{enter} view $v$}
\EndEvent

\medskip
\Event{receiving \msg{Locked}{lock'}}
\State{$lock\gets $ higher lock (by view) between $lock$ and $lock'$}
\State{\textbf{echo} \msg{Locked}{lock'} to all}
\EndEvent

\end{algorithmic}
\end{algorithm}

We describe the protocol next.

\par \underline{\textit{Status step.}}
Each view begins with every node sending a \textsc{Status} message to the leader of the current view. A node also starts a timer for the view. 

\par \underline{\textit{Leader proposal step.}} When $L_v$ is in view $v$ and receives $n-f$ \msg{Status}{v,-} messages, it proposes the highest locked value among those. Note that $L_v$ only sends one \textsc{Propose} message in a view.
When a node is in view $v$ and receives a \msg{Propose}{v,val} message, it updates its lock to $(v,val)$ and sends a \msg{Vote}{v,val} message to all nodes.


\par \underline{\textit{Commit step.}} When a node receives a quorum of $n-f$ \msg{Vote}{v,val} messages or a single \msg{Commit}{val} message, it commits $val$, sends a \msg{Commit}{val} message to all nodes, and terminates.

\par \underline{\textit{View change step.}} When a node times out in a view $v$ without committing a value, it sends \msg{NewView}{v+1} to all nodes, asking them to move to the next view.
%
Upon receiving \msg{NewView}{v} for a higher view $v$, a node echoes \msg{NewView}{v} and its own lock to all nodes, waits for $2d\Delta$ time, and then enters view $v$. 
During this waiting period, the node will not send \textsc{Vote} for its current view but will listen for \textsc{Locked} messages to update its lock and also echo locks. 
The $2d\Delta$ time accounts for the worst-case round-trip delay to send a \textsc{NewView} message and receive the \textsc{Locked} message.

\subsection{Analysis}




\begin{lemma}\label{lem:gps-cft-3}
If some node commits $val$ in view $v$, then any \msg{Propose}{v',val'} message in view $v'\geq v$ must have $val'=val$.
\end{lemma}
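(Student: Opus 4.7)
The plan is a strong induction on $v' \geq v$. The base case $v' = v$ is immediate: the leader $L_v$ sends at most one \textsc{Propose} in view $v$ (line~\ref{propose-send}), and since some node committed $val$ it received $n-f$ \textsc{Vote}$(v,val)$ messages, which are only produced in response to \textsc{Propose}$(v,val)$; hence the unique proposal in view $v$ is for $val$.

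For the inductive step I would assume the lemma for every view $u \in [v, v')$ and first extract the useful consequence that any lock whose view lies in $[v, v')$ must carry value $val$, because locks are only overwritten when a node receives a \textsc{Propose} (line~\ref{update-lock}) and by hypothesis every such \textsc{Propose} carries $val$. Reducing the goal, it suffices to show that among the $n-f$ \textsc{Status} messages $L_{v'}$ uses at line~\ref{highest-rank} (call that set $Q_2$), at least one carries a lock of view $\geq v$; the highest-ranked such lock will then have value $val$, forcing $val' = val$.

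The heart of the argument is a boosted quorum intersection through the condition of Theorem~\ref{thm:cft-gps}. Let $Q_1$ be the $n-f$ voters of view $v$; each $a \in Q_1$ has a lock of view $\geq v$ after executing line~\ref{update-lock}. Applying the condition with $A = Q_1$ yields $B$ with $|B| \geq f+1$ and $Q_1 \to B$. Since $|Q_2| + |B| \geq (n-f) + (f+1) > n$, the intersection $Q_2 \cap B$ is nonempty, and I would fix some $c \in Q_2 \cap B$ together with a witness $a \in Q_1$ and a synchronous path of length at most $d$ from $a$ to $c$.

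The remaining step is a timing argument wrapped around the $2d\Delta$ wait. Before entering $v'$, node $c$ echoed \msg{NewView}{v'} exactly $2d\Delta$ earlier. That echo traverses the reverse of the witnessing path to $a$ in at most $d\Delta$ (each correct intermediate echoes \textsc{NewView} upon first receipt), so $a$ then transmits \msg{Locked}{lock} with $lock$'s view $\geq v$; this \textsc{Locked} travels back to $c$ along the same synchronous path in at most another $d\Delta$, arriving before $c$ finishes its wait and emits \textsc{Status}. A small side check I would spell out is that $a$'s lock at that transmission time has view strictly below $v'$ (else one races the very proposal being analyzed), which follows because $c \in Q_2$ enters $v'$ no earlier than $L_{v'}$'s proposal time, so $a$ cannot receive \textsc{Propose}$(v', \cdot)$ by $T_c - d\Delta$. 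The subtlest point I anticipate is guaranteeing that the chosen witness $a$ is itself uncrashed long enough to actually relay its lock, since the raw condition only produces \emph{some} $a \in Q_1$ reaching $c$ and that $a$ could in principle be crash-faulty; I would handle this by sharpening the choice of $A$---for instance restricting to the correct voters where that set is still large enough, or exploiting the ``regardless of which up to $f$ nodes are faulty'' quantifier to dismiss adversarial crashes of the carrier.
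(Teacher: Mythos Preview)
Your proof follows the paper's almost exactly. The one structural difference is that you apply the graph condition of Theorem~\ref{thm:cft-gps} to the view-$v$ voters $Q_1$ and then intersect the boosted set $B$ with the \textsc{Status} senders $Q_2$, whereas the paper applies the condition to the \textsc{Status} senders $P$ and intersects the boosted set $Q$ with the voters $R$. Because the graph is undirected these two directions are interchangeable, and your timing argument (echo \textsc{NewView} out along the synchronous path, receive the echoed \textsc{Locked} back within the $2d\Delta$ wait) matches the paper line by line. Your explicit ``side check'' that $a$'s lock has view $<v'$ when it transmits \textsc{Locked} is harmless but unnecessary: the \textsc{NewView} handler fires only when $v'>v_i$, so $a$ has not yet entered view $v'$ at that moment; note also that your stated justification (``$c$ enters $v'$ no earlier than $L_{v'}$'s proposal time'') has the inequality reversed---$c$ must enter $v'$ \emph{before} $L_{v'}$ can propose, since $L_{v'}$ waits for $c$'s \textsc{Status}.

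The crash concern you flag at the end is a genuine subtlety, and the paper's own proof glosses over exactly the same point: its node $q$ (your $a$) is a view-$v$ voter that may have crashed before the moment it is supposed to relay \textsc{Locked}. Neither of your suggested patches works in the regime $n\le 2f$ that motivates the lemma: restricting $A$ to \emph{correct} voters may leave fewer than $n-f$ nodes, so the condition no longer applies; and the ``regardless of which up to $f$ nodes are faulty'' quantifier only forces \emph{intermediate} nodes on the path to lie outside $F$, saying nothing about the endpoint $a$. So on this point your write-up is at least as careful as the paper's, but you should not claim the issue is resolved by those remarks.
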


\begin{proof}
We prove this lemma by induction on view $v'$. 
The base case of $v'=v$ is straightforward since each leader proposes only one value, so $val'=val$. 

For the inductive step, suppose the lemma holds up to view $v'-1$, and we consider view $v'$. Suppose for the sake of contradiction that some node commits $val$ in view $v$, and there is a \msg{Propose}{v',val'} message from $L_{v'}$ for $val'\neq val$. 
$L_{v'}$ must have received \msg{Status}{v',-} messages from a set $P$ of $n-f$ nodes.
By the condition in theorem~\ref{thm:cft-gps}, $P\rightarrow Q$, where $Q$ is a set of $f+1$ nodes. 
Since a node committed $val$ in view $v$, there must exist a set $R$ of $n-f$ nodes that sent \msg{Vote}{v,val} messages and updated $lock\coloneqq (v,val)$ in view $v$. Sets $Q$ and $R$ intersect in at least one node. Let this node be $q$. 


Since the graph is undirected, there must exist a node $p\in P$ such that $q\rightarrow p$. By the induction hypothesis, \textsc{Propose} messages from view $v$ to $v'-1$ must be for $val$. Since a node only updates its lock monotonically based on view numbers, node $q$ must have a lock with view $\geq v$ for $val$. 
Let $t_p$ be the time node $p$ echoed \msg{NewView}{v'}. By time $t_p+d\Delta$, node $q$ receives \msg{NewView}{v'}. Upon receiving \msg{NewView}{v'}, node $q$ sends a \msg{Locked}{lock} message to all nodes. This lock is received by node $p$ by time $t_p+2d\Delta$. Node $p$ updates its lock to view $\geq v$ for $val$ before entering view $v'$. Thus, $L_v$ receives at least one \textsc{Status} message for $val$ with view $\geq v$ and propose $val$, a contradiction.   
\end{proof}



\begin{theorem}[Agreement]
No two nodes commit different values.    
\end{theorem}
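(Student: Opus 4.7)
The plan is to reduce Agreement to Lemma~\ref{lem:gps-cft-3} via a standard two-step case analysis. Suppose for contradiction that two nodes commit different values $val \neq val'$; say the first node commits $val$ in view $v$ and the second commits $val'$ in view $v'$, with $v \leq v'$ without loss of generality. I will separately handle $v = v'$ and $v < v'$.

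For the case $v = v'$, I first need to deal with the subtlety that a node may commit either upon collecting $n-f$ \textsc{Vote} messages or upon receiving a single \textsc{Commit} message. The \textsc{Commit}-message pathway merely forwards an already-committed value, so I trace any commit back to an \emph{origin} node that committed via the $n-f$-vote rule. Since a correct leader sends only one \textsc{Propose} per view (and \textsc{Vote}s are only sent in response to a received \textsc{Propose} in the same view), at most one value can accumulate any votes in view $v = v'$; hence $val = val'$, contradicting the assumption.

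For the case $v < v'$, I again trace node $j$'s commit back to an origin node that collected $n-f$ \msg{Vote}{v',val'} messages. Each such vote was triggered by a \msg{Propose}{v',val'} message from $L_{v'}$. But Lemma~\ref{lem:gps-cft-3} says that since some node committed $val$ in view $v$, every \textsc{Propose} message in any view $v' \geq v$ must carry value $val$. Hence $val' = val$, again a contradiction.

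The only mildly nontrivial step is the \textsc{Commit}-forwarding subtlety, which should be dispatched with a brief remark (e.g., by induction on the order in which nodes send \textsc{Commit}, the earliest sender must have used the $n-f$-vote rule). Everything else is an immediate appeal to the inductive invariant already established in Lemma~\ref{lem:gps-cft-3}, so the proof will be short.
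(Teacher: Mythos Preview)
Your proposal is correct and follows essentially the same approach as the paper: both reduce Agreement to Lemma~\ref{lem:gps-cft-3}. The paper's version is more compact---it simply lets $v$ be the smallest view in which any node commits and observes that, by Lemma~\ref{lem:gps-cft-3}, every \textsc{Propose} in views $\geq v$ (including $v$ itself) carries $val$, so no other value can ever be voted for or forwarded in a \textsc{Commit} message---thereby collapsing your $v=v'$ and $v<v'$ cases into one and leaving the \textsc{Commit}-forwarding subtlety implicit.
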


\begin{proof}
Let $v$ be the smallest view in which a node commits some value, say $val$. 
Since only $val$ can be proposed in view $v$ and all subsequent views by lemma~\ref{lem:gps-cft-3}, no node can commit a different value. 
\end{proof}

\begin{theorem}[Termination]
All correct nodes eventually decide.    
\end{theorem}

\begin{proof}
With round-robin leader election, correct nodes are elected leaders infinitely often. Thus, there must be a view $v$, after $GST+2d\Delta$, whose leader is correct. We next prove that all nodes will decide and terminate in view $v$ (if they don't decide earlier). 

Let $t$ ($t \geq GST+2d\Delta$) be the first time some correct node enters view $v$.
This correct node sends \msg{NewView}{v} to all nodes at $t-2d\Delta \geq GST$. 
All correct nodes receive \msg{NewView}{v} by time $t-2d\Delta+\Delta$, wait $2d\Delta$ themselves, and enter view $v$ by time $t+\Delta$. 
Upon entering view $v$, they send \msg{Status}{v,-} messages to $L_v$. $L_v$ receives $n-f$ \msg{Status}{v,-} messages by time $t+2\Delta$, and sends a \msg{Propose}{v,-} message to all nodes. 
All correct nodes receive the \msg{Propose}{v,-} message and send \msg{Vote}{v,-} messages by time $t+3\Delta$. 
All correct nodes receive $n-f$ \msg{Vote}{v,-} messages and commit by time $t+4\Delta$. 
Since a node's view timer is $4\Delta$, all correct nodes commit and terminate in view $v$.
\end{proof}

\begin{theorem}[Validity]
If all nodes have the same input $val$, then all correct nodes eventually decide $val$.  
\end{theorem}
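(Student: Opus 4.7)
The plan is to establish, by induction on the view number, the invariant that every lock held by any node is for value $val$, and hence any \textsc{Propose}, \textsc{Vote}, or \textsc{Commit} message that is ever sent is for $val$. Combined with the Termination theorem (which guarantees that every correct node eventually commits), this will force every correct node's decision to be $val$.

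First I would set up the base case. Since all $n$ nodes (including faulty ones, before any crashes) have input $val$, the initial lock at every node is $(0, val)$. Hence before view $1$ begins, every lock in the system is for $val$, and every \msg{Status}{1,-} message that any leader receives carries a lock whose value is $val$.

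For the inductive step, I would assume that through view $v-1$, all locks held by all non-crashed nodes have value $val$, and that all \textsc{Propose} messages sent so far are for $val$. I then argue that this extends to view $v$. There are only two ways a node's lock can change: (i) upon receiving a \msg{Propose}{v,val'} message, a node sets $lock \gets (v, val')$; and (ii) upon receiving a \msg{Locked}{lock'} message, a node possibly replaces its lock with $lock'$. For (i), the leader $L_v$ bases its proposal on $n-f$ \textsc{Status} messages; by the induction hypothesis all these locks are for $val$, so the highest-ranked one (ties among view-$0$ locks are irrelevant since they share the value $val$) has value $val$, and therefore $L_v$ proposes $val$. For (ii), any \textsc{Locked} message is an echo of a lock held by some node in view $\leq v-1$ or just updated in view $v$ as in (i); by the induction hypothesis and case (i), its value is $val$. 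Thus every updated lock in view $v$ is again for $val$.

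The invariant implies that no \textsc{Vote} or \textsc{Commit} message for a value other than $val$ is ever produced by a correct node, so no correct node can commit a value different from $val$. By the Termination theorem each correct node does commit, and hence must commit $val$. The only subtle point, and the one I would be most careful about, is the tie-breaking at the leader when all received locks have view number $0$; the fact that every such lock carries the same value $val$ (because \emph{all} nodes, faulty ones included, started with input $val$) is what keeps the argument clean.
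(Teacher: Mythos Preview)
Your proposal is correct and takes essentially the same approach as the paper: an induction on views showing that only $val$ can ever be proposed (and hence locked, voted, or committed), followed by an appeal to Termination. The paper's proof is simply terser, saying ``following a similar proof as in Lemma~\ref{lem:gps-cft-3}, no other value can be proposed in all subsequent views''; you have spelled out that induction explicitly, and your observation that the graph condition is not even needed here (because \emph{all} nodes, not just a quorum, start with a lock on $val$) is a nice simplification over the full machinery of Lemma~\ref{lem:gps-cft-3}.
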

\begin{proof}
If all nodes have the same input $val$, all nodes set $lock \gets (0, val)$. Following a similar proof as in lemma~\ref{lem:gps-cft-3}, no other value can be proposed in all subsequent views. Validity follows from termination. 
\end{proof}


\section{CFT Consensus in \gascaps}

\begin{theorem}\label{thm:cft-gpa}
Under \gas, CFT consensus on a graph $G=(V,E)$ can be solved deterministically if and only if, (i) the condition in theorem~\ref{thm:cft-gps} holds and (ii) for all $F$ with $|F| \leq f$, less than $n-f$ nodes are outside the largest connected component of $G'=(V-F,\diamond E)$ where $\diamond E$ is the set of synchronous and partially synchronous edges. 
\end{theorem}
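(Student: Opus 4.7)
The plan is to handle necessity and sufficiency separately, with different techniques for each of the two conditions.

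For necessity of (i), the argument from Section~\ref{s:cft-gps} transfers essentially verbatim. That proof only relies on partial-sync edges, and since \gas{} is a strict relaxation of \gps{} (an async edge is at least as adversarial as a partial-sync one with arbitrarily large GST), the same three-execution partitioning argument produces disjoint sets of correct nodes that decide different values whenever (i) fails.

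For necessity of (ii), the plan is an FLP-style partitioning argument. If (ii) fails for some $F$ with $|F|\leq f$, let $L$ be the largest non-async component of $V-F$ and $O=(V-F)\setminus L$, so $|O|\geq n-f$, $|L|\leq f-|F|$, and hence $|F\cup L|\leq f$. Every $L$-$O$ edge must be async by maximality of $L$, so the adversary can delay cross-partition messages arbitrarily. The construction proceeds as three executions: $E_1$ with all inputs $v_1$ and only $F$ crashed (by validity, correct nodes decide $v_1$ at some time $t_1$); $E_2$ with all inputs $v_2$ and $F\cup L$ crashed (so $O$ decides $v_2$ at time $t_2$); and a hybrid $E_3$ where $L$ has $v_1$, $O$ has $v_2$, only $F$ crashes, and every $L$-$O$ async message is delayed past $\max(t_1,t_2)$. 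An indistinguishability argument should then force $L$ to decide $v_1$ and $O$ to decide $v_2$ in $E_3$, violating agreement.

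For sufficiency, the plan is to extend Algorithm~\ref{alg:gps-cft} with view timers that grow across views (so that in each concrete execution some view's timer eventually exceeds the async delays the adversary imposes on the messages needed to complete it), and with the partial-sync diameter $d'$ replacing $d$ wherever the protocol waits for lock propagation (so \textsc{Locked} echoes can travel along partial-sync paths, not only sync ones). Safety is preserved exactly as in Lemma~\ref{lem:gps-cft-3}: condition (i) still yields an $f+1$-expansion of any proposal quorum via sync paths, which must intersect any $n-f$ commit quorum. Liveness uses condition (ii): eventually round-robin elects a correct leader inside the largest non-async component $L^*$, and since strictly fewer than $n-f$ correct nodes lie outside $L^*$, no view-change cascade can preempt the leader; once the growing timer outlasts the relevant async delays in the execution, the leader assembles its $n-f$ quorum and commits.

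The hardest step is the necessity of (ii). The three-execution skeleton is clean, but the indistinguishability step is not a direct DLS-style partition-and-swap: because $L$ alone cannot assemble an $n-f$ quorum, it may not be straightforward to force $L$ to decide before cross-partition messages arrive. Making the argument rigorous requires an FLP-style scheduling step, exploiting that any a priori bound on $L$'s decision time in $E_1$ can be outlasted by the adversary's choice of async delay, so that $L$'s local view at its decision time truly coincides with its view in $E_3$. A subsidiary subtlety on the sufficiency side is calibrating the growing view timer to outpace the async delay in each concrete execution while not stalling progress in early views whose timers are still too short.
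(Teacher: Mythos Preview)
Your plan has two genuine gaps.

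\textbf{Necessity of (ii).} The three-execution partition you set up cannot be completed. You can make $O$'s view in $E_3$ match $E_2$ and force $O$ to decide $v_2$ by time $t_2$. But the symmetric step for $L$ fails: since $|L|\le f-|F|<n-f$, nothing prevents a correct algorithm from having every node in $L$ wait until it hears from $O$ over the asynchronous links before deciding. In $E_1$ with cross-partition messages delayed to time $T$, such an algorithm has $L$ decide only after time $T$; for every $T$ this is still consistent with termination (the decision time is finite, just larger than $T$). In $E_3$, once $L$ does hear from $O$ it sees $O$'s input $v_2$ and decision $v_2$ and may simply decide $v_2$ too, so no agreement violation arises. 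The obstruction when (ii) fails is a \emph{liveness} obstruction, and the paper proves it by a direct reduction to FLP: after crashing $F$, the graph on $\diamond E$ has $q\ge 2$ components $C_1,\dots,C_q$, each with $|F\cup C_i|\le f$; one simulates the algorithm on a $q$-node fully asynchronous system where external node $i$ runs all of $C_i$, and crashing external node $i$ corresponds to crashing $F\cup C_i$ in $G$. A deterministic solver on $G$ would thus yield deterministic asynchronous consensus tolerating one crash, contradicting FLP. Your ``FLP-style scheduling step'' is not a repair of the three-execution argument; it is a replacement of it.

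\textbf{Sufficiency.} Growing view timers do not work against an adaptive asynchronous adversary. In each view $v$ with timer $T(v)$, the adversary delays the asynchronous messages needed in that view by more than $T(v)$; since delays are chosen after $T(v)$ is fixed, no view's timer ever ``outlasts the relevant async delays.'' In particular, a leader in the largest non-async component $L^*$ may still need \textsc{Status} messages from outside $L^*$ to reach $n-f$ (condition (ii) yields only $|L^*|>f-|F|$, not $|L^*|\ge n-f$), and those can be withheld past any timer. The paper makes two structural changes instead of enlarging the clock: (a) a view is abandoned only upon $n-f$ \textsc{ViewChange} messages, so the fewer-than-$(n-f)$ nodes outside $L^*$ (together with crashed $F$) cannot force a view change; and (b) a node starts its proposal timer only \emph{after} it has received and echoed $n-f$ \textsc{Status} messages, so the timer measures just the round trip to the leader and back within $L^*$ along partially synchronous paths, which is bounded by $O(d'\Delta)$ after GST. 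You correctly identify that condition (ii) is what protects a good leader, but the mechanism that exploits it is the $n-f$ view-change quorum together with the delayed timer start, not timer growth.
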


In other words, condition (ii) says that if we remove all asynchronous edges and all faulty nodes from $G$ and further remove the largest connected component in the remaining graph, then there are fewer than $n-f$ nodes left.

\subsection{Necessity}

\begin{proof}
Condition (i) is already proved to be necessary in theorem~\ref{thm:cft-gps}. We focus on condition (ii). 
Suppose for the sake of contradiction there exists a deterministic algorithm $\mathcal{A}$ that solves CFT consensus on a graph $G$ that violates condition (ii). 
This means there exists a set $F$ with $|F|\leq f$ such that removing the largest connected component from $G'=(V-F,\diamond E)$ ($G$ with $F$ and all asynchronous edges removed) leaves $\geq n-f$ nodes.

Suppose the graph $G'$ has $q$ connected components. Clearly, $q > 1$.
Let $C_i$ be $i$-th connected component in $G'$.
We have $|F \cup C_i| \leq f$ for all $i$ because even the largest connected component plus $F$ has at most $f$ nodes. 


We construct an external system consisting of $q$ nodes connected only by asynchronous links. 
We can convert $\mathcal{A}$ into a deterministic algorithm that solves consensus in this external system while tolerating one crash fault. 
To do so, let the $i$-th node in the external system, $q_i$, simulate the nodes in $C_i$ in $\mathcal{A}$. 
If $q_i$ has input $v_i$, then all nodes in $C_i$ have input $v_i$ in the simulation. 

An execution in this external system with $q_i$ crashing at time $t$ faithfully simulates an execution of $\mathcal{A}$ with $F$ crashing in the beginning and $C_i$ crashing at time $t$. 
In particular, observe that two connected components in $G'$ only have asynchronous edges between them once nodes in $F$ crash.  
Since $|F \cup C_i| \leq f$ for all $i$, $\mathcal{A}$ solves consensus in the original system.
Thus, the simulated algorithm solves consensus deterministically in the external system while tolerating one crash fault in asynchrony.
This contradicts the FLP impossibility~\cite{flp}. 
\end{proof}



\subsection{Protocol}

Next, we adapt our previous CFT consensus protocol in algorithm~\ref{alg:gps-cft} from \gps{} to \gas{}, assuming the condition in theorem~\ref{thm:cft-gpa} holds. This establishes the sufficiency of the condition.

Our prior CFT consensus protocol still maintains safety under \gas{}, but liveness no longer holds because there is no time when all edges behave synchronously (asynchronous links do not have a $GST$ assumption). 
As a result, correct leaders in our prior protocol may continuously time out.
Luckily, condition (ii) in theorem~\ref{thm:cft-gpa} can be leveraged to guarantee that when the set $F$ of crashed nodes stops growing, and a correct node in the largest connected component of $G'=(V-F,\diamond E)$ is elected leader after GST, this leader will not be replaced and will make progress. 
To do so, we first require $n-f$ nodes to initiate a view change.
This way, because all nodes in $F$ are crashed and fewer than $n-f$ nodes are outside the largest connected component of $G'=(V-F,\diamond E)$, we just need to make sure that no node in this largest connected component initiates a view change. 
This technique is similar to those used in view synchronizers~\cite{alexey-live,alexey-live-2} to make sure correct nodes eventually overlap and remain in the same view to ensure termination.



\begin{algorithm}[tb]
\caption{CFT consensus protocol in \gas{} for node $i$}\label{alg:gas-cft}
\begin{algorithmic}[1]
\State{$v_i \gets 0$}\Comment{Initialize local view number}
\State{$lock \gets (0, input_i)$ }    \Comment{Initially lock on the input value}
\State{\textbf{enter} view 1}    

\medskip
\Event{entering view $v$}
\State{$v_i \gets v$}
\State{\textbf{send} \msg{Status}{v,lock} to all}
\EndEvent


\medskip
\Event{receiving $n-f$ \msg{Status}{v_i,-} where $i \neq L_{v_i}$}
\State{\textbf{echo} these $n-f$ \msg{Status}{v_i,-} to all}
\State{\textbf{start} $proposal\_timer\gets timer(3d'\Delta)$}
\EndEvent


\medskip
\Event{receiving \msg{Propose}{v_i,val}}
\State{$lock\gets (v_i,val)$}
\State{\textbf{echo} \msg{Propose}{v_i,val} to all}
\State{\textbf{send} \msg{Vote}{v_i,val} to all}
\EndEvent

\medskip
\Event{$proposal\_timer$ expiring and no leader proposal received}
\State{\textbf{send} \msg{ViewChange}{v_i} to all} \label{blame-gas}
\EndEvent

\medskip
\Event{receiving $n-f$ \msg{ViewChange}{v}}
\State{\textbf{send} \msg{NewView}{v+1} to all} \label{blame-cert-gas}
\EndEvent


\medskip
\State{\textsc{Vote, Commit, Locked, NewView} messages at all nodes and \textsc{Status} messages at view leaders are processed the same way as in Algorithm~\ref{alg:gps-cft}}

\end{algorithmic}
\end{algorithm}

We only describe the status and view change steps since the rest of the protocol remains the same as algorithm~\ref{alg:gps-cft}.

\par \underline{\textit{Status and propose step.}}
Upon entering a new view $v$, a node sends a \msg{Status}{v,lock} message to \textbf{all} nodes. When a node receives at least $n-f$ \msg{Status}{v,-} messages, it forwards this set of \textsc{Status} messages to all nodes and starts a timer of $3d'\Delta$ duration. 
Upon receiving a proposal, a node forwards the proposal to all nodes, in addition to locking on and voting for the proposal.
The same vote and commit steps from algorithm~\ref{alg:gps-cft} follow.

\par \underline{\textit{View change.}} A node suspects the leader is faulty if it does not receive a \msg{Propose}{v,-} message before its timer expires. When this occurs, a node sends a \msg{ViewChange}{v} message to all nodes, indicating it wishes to quit view $v$. When a node receives $n-f$ \msg{ViewChange}{v} messages for the current view $v$, it sends a \msg{NewView}{v+1} message to all nodes. Upon receiving a \textsc{NewView} message, a node carries out the same new view step from algorithm \ref{alg:gps-cft}.

\subsection{Analysis}

The agreement and validity proofs are identical to the \gps{} CFT case. We focus on termination.



\begin{lemma}
If no correct node ever terminates, then every correct node keeps entering higher views.
\label{lemma:cft-gas-liveness}
\end{lemma}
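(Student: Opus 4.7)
The plan is to argue by contradiction. Assume no correct node terminates and let $v^{*}$ be the largest view any correct node ever enters, assuming this is finite. I would first show that \emph{every} correct node eventually enters view $v^{*}$: the first correct node to enter $v^{*}$ received \msg{NewView}{v^{*}} and echoed it (for $v^{*}=1$ this is trivial since every node enters view $1$ at initialization), and since every link --- even asynchronous ones --- delivers each message eventually, this echo reaches every correct node, which then runs the \textsc{NewView} handler and enters view $v^{*}$ (no higher view is possible, by maximality of $v^{*}$). Consequently at least $n-f$ correct nodes send \msg{Status}{v^{*},lock} in view $v^{*}$, and by eventual delivery every correct non-leader eventually collects $n-f$ \textsc{Status} messages and starts its $3d'\Delta$ proposal timer.

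I would then split on whether any correct node ever receives a \msg{Propose}{v^{*},val} message in view $v^{*}$. If some correct node does, it echoes \textsc{Propose} to all; by eventual delivery every correct node eventually receives a \textsc{Propose} and sends \msg{Vote}{v^{*},val} to all; and by eventual delivery once more every correct node collects at least $n-f$ \textsc{Vote} messages and commits, contradicting the assumption that no correct node terminates.

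Otherwise no correct node ever receives a \textsc{Propose} in $v^{*}$. Each correct node's proposal timer then expires without a proposal, so every correct node sends \msg{ViewChange}{v^{*}}; these at least $n-f$ messages eventually reach every correct node, so some correct node sends \msg{NewView}{v^{*}+1} to all, and by the same echo-and-eventual-delivery argument used above every correct node eventually enters a view $\geq v^{*}+1$, contradicting the maximality of $v^{*}$.

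The main obstacle is that asynchronous links deliver only eventually with no finite bound, so one cannot rely on the $3d'\Delta$ timer to force a \textsc{Propose} to arrive in time, nor can one rely on $\Delta$-bounded propagation at all. The argument therefore must avoid every timing claim and rely purely on gossip-style propagation through the protocol's echo steps for \textsc{NewView}, \textsc{Status}, \textsc{Propose}, \textsc{Vote}, and \textsc{ViewChange}. A minor subtlety worth double-checking is that a correct node that receives \msg{NewView}{v^{*}} while at or mid-wait for some view $\leq v^{*}$ still completes the $2d\Delta$ wait and enters $v^{*}$ cleanly; this timing does not affect the eventual view progression used here. Notably, condition (ii) of Theorem~\ref{thm:cft-gpa} plays no role in this lemma --- it will be needed only in the subsequent step that upgrades this liveness of views into an actual termination guarantee.
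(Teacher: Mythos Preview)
Your proof is correct and follows essentially the same approach as the paper's: argue by contradiction that a maximal view $v^{*}$ exists, use \textsc{NewView} echoes to bring every correct node into $v^{*}$, and then derive either termination (via \textsc{Propose}/\textsc{Vote} propagation) or a move to $v^{*}+1$ (via $n-f$ \textsc{ViewChange} messages). The only difference is organizational---the paper phrases the last step as ``since $n_1$ never receives $n-f$ \textsc{ViewChange}, some correct node must have received and echoed a \textsc{Propose}'' instead of your explicit case split, but the two arguments are equivalent.
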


\begin{proof}
Suppose for the sake of contradiction, there exists a correct node $n_1$, which never enters a higher view. Let $v$ be the view $n_1$ is in.
If any correct node ever enters a view higher than $v$, it sends a \textsc{NewView} message for that higher view to all nodes. $n_1$ will eventually receive this higher \textsc{NewView} message and enter a higher view, a contradiction. 
Thus, no node ever enters a view higher than $v$.
Before entering view $v$, $n_1$ has sent \msg{NewView}{v} to all nodes. All correct nodes will eventually receive this \msg{NewView}{v} message, enter view $v$, and send \msg{Status}{v,-} messages. Eventually, correct nodes will receive $n-f$ \msg{Status}{v,-} messages and start their proposal timers. If $n_1$ receives $n-f$ \msg{ViewChange}{v} messages, it will enter view $v+1$, a contradiction. Thus $n_1$ never receives $n-f$ \msg{ViewChange}{v} messages. 
Then, there must be at least one correct node that never sends \msg{ViewChange}{v} and instead echoes \msg{Propose}{v,-} to all nodes. Eventually, all correct nodes will receive \msg{Propose}{v,-} message and send \msg{Vote}{v,-} messages to all nodes. Eventually $n_1$ will receive $n-f$ \msg{Vote}{v,-} messages and terminate, a contradiction.
\end{proof}

\begin{theorem}
All correct nodes eventually terminate.    
\end{theorem}

\begin{proof}
Suppose for the sake of contradiction that some correct node never terminates.
Observe that if one correct node terminates, it sends a \textsc{Commit} message and makes all correct nodes eventually terminate. 
Thus, no correct node ever terminates. 
By lemma~\ref{lemma:cft-gas-liveness}, every correct node keeps entering higher views.

Eventually, there will be a first time after $GST+2d\Delta$ that some correct node enters a view $v$ such that (i) the set $F$ of crashed nodes no longer grows in views $\geq v$, (ii) $L_v \not\in F$, and (iii) $L_v$ is in the largest connected component $G'=(V-F,\diamond E)$.
Let $C$ denote this largest connected component. 
We next prove no node in $C$ will ever send \msg{ViewChange}{v}.

Let $p$ be the first node in $C$ that enters view $v$, and let $p$ enter view $v$ at time $t > GST+2d\Delta$. 
Observe that no node in $C$ will send \msg{ViewChange}{v} before time $t+3d'\Delta$ (proposal timer duration is $3d'\Delta$). 
Nodes in $F$ crashed before entering view $v$ and cannot send \msg{ViewChange}{v}.
Due to the condition in theorem~\ref{thm:cft-gpa}, $n-|C \cup F|<n-f$.
Thus, there will not be $n-f$ \msg{ViewChange}{v} messages before $t+3d'\Delta$. 

$p$ sends \msg{NewView}{v} at time $t-2d\Delta>GST$. 
All nodes in $C$ receive \msg{NewView}{v} by $t-2d\Delta+d'\Delta$, enter view $v$ by $t+d'\Delta$, and stay in view $v$ at least until $t+3d'\Delta$.

When a node $q \in C$ receives $n-f$ \msg{Status}{v,-} messages at time $t'>t$, $q$ echoes these $n-f$ messages and starts its proposal timer. 
All nodes in $C$ enter view $v$ by time $t+d'\Delta$ and are ready to echo these \msg{Status}{v,-} messages.
(Recall that $d'$ is the partially synchronous diameter of the graph.)
$L_v$, which is in $C$, receives these $n-f$ \msg{Status}{v,-} messages by time $\max(t+2d'\Delta, t'+d'\Delta)<t'+2d'\Delta$.
$L_v$ sends a \msg{Propose}{v,-} message by time $t'+2d'\Delta$ and it reaches $q$ by time $t'+3d'\Delta$, which is before $q$'s proposal timer expires. 
Thus, $q$ does not send \msg{ViewChange}{v}. 
This establishes that no node in $C$ will ever send \msg{ViewChange}{v}.
Again, nodes in $F$ never send \msg{ViewChange}{v}.
Since $n-|C \cup F|<n-f$, there will never be $n-f$ \msg{ViewChange}{v} messages.
Thus, no correct node ever enters a view higher than $v$.
This contradicts lemma~\ref{lemma:cft-gas-liveness}.
\end{proof}
\section{BFT Consensus in \gpscaps}

\begin{theorem}\label{thm:bft-gps}
Under \gps, BFT consensus with $n \geq 2f+1$ on a graph $G$ is solvable if and only if, for any set $F$ of at most $f$ faulty nodes,
$\forall A \subseteq V - F$ with $|A| \geq n-2f$, $\exists B \subseteq V-F$ with $|B| \geq f+1$ such that $A \rightarrow B$. 
\end{theorem}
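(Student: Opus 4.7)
My plan mirrors Theorem~\ref{thm:cft-gps}: necessity via a DLS-style three-execution indistinguishability argument, and sufficiency by adapting Algorithm~\ref{alg:gps-cft} with Byzantine safeguards. The condition is vacuous when $n \geq 3f+1$ (any $A$ with $|A|\geq n-2f\geq f+1$ is itself a valid $B$), so both directions reduce to the range $2f+1\leq n\leq 3f$.

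For necessity, I would take a witnessing pair $(F,A)$ with $|F|\leq f$ and $|A|\geq n-2f$ such that the set $B$ of correct nodes in $V-F$ reachable from $A$ along synchronous paths with correct intermediates satisfies $|B|\leq f$. Shrinking $A$ to size exactly $n-2f$ and enlarging $F$ to exactly $f$ by adding members of $V-F-A$ (which can only shrink $B$ while preserving $A\subseteq B$ and thus $|B|\geq n-2f$), the complement $C := V-F-B$ has $|C| = n-f-|B| \leq f$, so $B, F, C$ tripartition $V$ with each part of size at most $f$. By construction, every synchronous path between $B$ and $C$ traverses some node of $F$, and in particular every direct $B$-$C$ edge is partially synchronous. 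I then run the three-execution argument: (1) $B\cup F$ correct with input $v_0$ and $C$ silent-Byzantine, forcing $B$ to decide $v_0$ by validity; (2) $F\cup C$ correct with input $v_1$ and $B$ silent-Byzantine, forcing $C$ to decide $v_1$; (3) $B$ correct with $v_0$, $C$ correct with $v_1$, and $F$ Byzantine, double-talking as in (1) toward $B$ and as in (2) toward $C$ while the adversary delays every partially synchronous $B$-$C$ edge past a late GST. Then $B$ cannot distinguish (3) from (1) and $C$ cannot distinguish (3) from (2), contradicting agreement.

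For sufficiency, I would extend Algorithm~\ref{alg:gps-cft} with signed messages, an $n-f$-vote certificate attached to every Locked message as unforgeable evidence of the lock, and a leader rule that only adopts certificate-backed locks. The view structure, round-robin leaders, $2d\Delta$ view-change wait, and echo-based lock propagation carry over. The main safety lemma is the Byzantine analogue of Lemma~\ref{lem:gps-cft-3}: if a correct node commits $val$ in view $v$, then every valid proposal in view $v'>v$ is also $val$. In the inductive step, the view-$v'$ leader collects $n-f$ Status messages, of which at least $n-f-|F|$ are from correct nodes; call this set $P$. By the theorem's condition, $P$ reaches a set $B'$ of at least $f+1$ correct nodes via synchronous paths. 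The commit in view $v$ similarly yields a set $R$ of at least $n-f-|F|$ correct voters locked on $(v,val)$, and since $|R|+|B'| \geq (n-f-|F|)+(f+1) > n-|F| = |V-F|$, the sets $R$ and $B'$ share some correct node $q$. During the $2d\Delta$ wait, $q$ forwards its locked certificate along its synchronous path (whose intermediates are correct and therefore cannot censor) to some $p\in P$, which relays it to the leader, forcing the proposal to be $val$; liveness and validity after GST inherit from the CFT analysis.

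The hardest part will be making the Byzantine lock-propagation step watertight: without unforgeable certificates, a Byzantine Status sender could advertise a bogus higher lock and hijack the next proposal, so the certificate format and echo discipline must be pinned down precisely. A secondary subtlety is that the intersection $R\cap B'\neq\emptyset$ is computed inside the universe of correct nodes $V-F$ rather than all of $V$; this is exactly why the theorem quantifies $B$ inside $V-F$, and the proof bookkeeping must honor that quantification throughout.
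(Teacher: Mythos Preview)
Your necessity argument is essentially the paper's: your $B$ is the paper's $A\cup B$, and the three-execution split-brain indistinguishability is identical.

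Your sufficiency sketch, however, has a real gap. Your safety lemma only claims that every valid proposal in view $v'>v$ equals $val$; you never address $v'=v$. In the CFT protocol the base case is trivial because a crashed leader still proposes at most one value, but a Byzantine leader can sign and send two different \textsc{Propose} messages in the same view. With $2f+1\le n\le 3f$, two $(n-f)$-vote quorums intersect in only $n-2f\le f$ nodes, all of which may be Byzantine and vote for both values; so your one-round ``Propose/Vote/Commit'' inherited from Algorithm~\ref{alg:gps-cft} allows two different values to each collect $n-f$ votes and be committed in the same view. Signed messages and certificate-backed locks do nothing to stop this. The paper handles this with a dedicated \emph{equivocation-check} step: on receiving a \textsc{Propose}, a node echoes it and waits $d\Delta$ before sending \textsc{Vote-1}; Lemma~\ref{lem:hps-bft-agree-1} then uses the graph condition (the $n-2f$ correct voters for one value synchronously reach $f{+}1$ correct nodes, which must intersect the correct voters for the other value) to show that one of the two voters would have detected the equivocation in time and refused to vote. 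This mechanism, and its proof, is exactly where the graph condition is used \emph{within} a view, and it is absent from your plan.

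A secondary issue: your inductive-step phrasing ``forcing the proposal to be $val$'' only constrains an honest leader. In the paper the \textsc{Propose} message carries the set $S$ of \textsc{Status} messages, and correct nodes refuse to vote unless the proposed value matches the highest certified lock in $S$; that voter-side check is what actually makes the intersection argument bite against a Byzantine leader. Your sketch puts the certificate check on the leader (``a leader rule that only adopts certificate-backed locks'') rather than on the voters, which is not enough.
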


In words, the condition is that any honest set $A$ of size at least $n-2f$ has a potentially larger honest set $B$ of size at least $f+1$, such that for any node $b \in B$ there exits $a\in A$ and a synchronous path from $a$ to $b$. Intuitively, if a message arrives at all of $A$, then it will also arrive at all of $B$ after some delay.

Note that in BFT consensus, it never hurts the adversary to corrupt the maximum number of nodes allowed since Byzantine nodes can actively participate. This is why we can focus on the case of $|F|=f$ (as opposed to $|F| \leq f$). 

Observe that the classic Byzantine fault tolerance bounds are special cases of our theorem. 
For example, when $n=2f+1$ and all links are synchronous, any $n-2f=1$ correct node has synchronous paths to all $n-f=f+1$ correct nodes, so consensus is solvable. 
At the other extreme, $n=3f+1$ is the smallest value of $n$ for which the condition in theorem \ref{thm:bft-gps} trivially holds even when all edges are partially synchronous (see necessity proof).
And again, we will focus on the more interesting region of $2f+1 < n \leq 3f$.

\subsection{Necessary}\label{s:hps-bft-necessary}

The proof is again very similar to DLS~\cite{dls}.
The essence of the condition (and the proof) is to prevent a ``split-brain'' attack in which two groups of $n-2f$ correct nodes cannot communicate in time and separately make progress with $f$ Byzantine nodes. 

\begin{proof}[\textbf{Proof of Theorem~\ref{thm:bft-gps} necessity part}]
For $n\geq 3f+1$, the theorem is vacuous because the condition trivially holds: any set of $n-2f \geq f+1$ correct nodes have synchronous paths to at least $f+1$ correct nodes (i.e., themselves).

For $n\leq 3f$, we prove by contradiction. 
Suppose there is an algorithm that solves consensus on a graph $G$ that does not satisfy the condition in the theorem. 
Then, there exists a set $F$ of $f$ nodes such that, if nodes in $F$ are faulty, a set $A$ of $n-2f$ correct nodes collectively have synchronous paths to at most $f$ correct nodes. 
Let $B$ be the set of these $f$ nodes excluding $A$.
Let $C$ be the remaining nodes, i.e., $C=[n] \setminus \{F \cup A \cup B\}$. 
Note that $\{A,B,F,C\}$ is a four-way disjoint partition of the $n$ nodes.
Also note that $n-2f = |A| \leq |A \cup B| \leq f$, $|F| = f$, and $|C| = n-|F \cup A\cup B| \leq f$. 

Next, we consider three executions.
In execution 1, all nodes have input $v_1$, and nodes in $C$ are Byzantine. 
Since $|C| \leq f$, $A\cup B$ eventually decide $v_1$ in time $t_1$ due to validity.
In execution 2, all nodes have input $v_2$, and nodes in $A \cup B$ are Byzantine. Since $|A \cup B| \leq f$, $C$ eventually decide $v_2$ in time $t_2$ due to validity.

In execution 3, nodes in $A \cup B$ have input $v_1$, nodes in $C$ have input $v_2$, nodes in $F$ are Byzantine, and $GST > \max(t_1, t_2)$. 
$F$ will behave towards $A\cup B$ like in execution 1 and towards $C$ like in execution 2. 
Because there is no synchronous link between $A\cup B$ and $C$, $A \cup B$ cannot distinguish execution 3 from execution 1 and $C$ cannot distinguish execution 3 from execution 2. 
Thus, $A \cup B$ decides $v_1$ and $C$ decides $v_2$, violating agreement.
\end{proof}

\subsection{Protocol}

Next, we give a new BFT consensus protocol assuming the condition in theorem~\ref{thm:bft-gps} holds. 
The protocol we present here achieves external validity~\cite{cachin-aba}.
In appendix~\ref{s:unanimity}, we show how to extend it to achieve the strong unanimity validity in definition~\ref{def:consensus}.
This establishes the sufficiency of the condition. 

Like in the CFT case, we will start from a standard leader-based partially synchronous BFT protocol and then take advantage of our graph condition to upgrade a quorum of $n-2f$ correct nodes to $f+1$ correct nodes.  

A $lock$ is a set $L$ of $n-f$ signed matching \msg{Vote-1}{view,val} messages from distinct nodes. 
Locks are ranked by their view numbers. 
%
%
We describe the protocol next.
\par \underline{\textit{Status step.}} Each view begins with every node sending a \textsc{Status} message to the leader of the current view. A node also starts a timer for the view.

\par \underline{\textit{Leader proposal step.}} When the leader of view $v$, $L_v$, receives a set $S$ of $n-f$ \msg{Status}{v,-} messages from distinct nodes, it picks the highest-ranked lock among those. 
If no lock is reported, then the leader can safely propose its own input value, $val_i$. Otherwise, the leader must propose the value in the highest-ranked lock. 
The leader sends \msg{Propose}{v,val,S} to all nodes. Note that a correct leader only sends one \textsc{Propose} message in a view.

\par \underline{\textit{Equivocation check step.}} When a node receives \msg{Propose}{v,val,S}, it checks whether $val$ is the highest-ranked locked value from the set $S$. If so, it forwards the \textsc{Propose} message to all nodes and starts a timer for $d\Delta$ to listen for conflicting \textsc{Propose} messages in the same view. 
If it receives a conflicting \textsc{Propose} message, it detects the leader is faulty, forwards the equivocation to all nodes, and sends a \textsc{ViewChange} message for the current view.
If the timer expires and no conflicting \textsc{Propose} message is received, the node will send a \msg{Vote-1}{v,val} message to all nodes indicating its support for the leader's proposal. 

\par \underline{\textit{Locking step.}} When a node receives $n-f$ \msg{Vote-1}{v,val} messages, it forms a lock certificate $L$ for $val$ in view $v$. 
The node updates its $lock\coloneqq L$ and sends a \msg{Vote-2}{v,val} message to all nodes. 
The equivocation check guarantees the uniqueness of the locked value in each view. 

\par \underline{\textit{Commit step.}} Upon receiving $C\gets n-f$ \msg{Vote-2}{v,val} messages, a node sends a \msg{Commit}{C} message. Upon receiving a \msg{Commit}{C} message, it commits and terminates.

\par \underline{\textit{View Change.}} 
A node sends \msg{ViewChange}{v} if it detects equivocation or times out in view $v$.
Upon receiving $f+1$ \textsc{ViewChange} messages, a node stops sending \textsc{Vote-1}/\textsc{Vote-2} messages in view $v$ and sends its lock to all nodes. A node cannot immediately enter the next view but instead must wait $2d\Delta$ time before doing so. This is to give enough time for locks to propagate in the network.

\begin{algorithm}[tbp]
\caption{BFT consensus protocol in \gps{} for node $i$}\label{alg:gps-bft}
\begin{algorithmic}[1]

\State{$v_i \gets 0,~lock \gets \bot$}\Comment{Initialize local view number and lock}

\State{\textbf{enter} view 1}  

\medskip
\Event{entering view $v$}
\State{$v_i \gets v$}
\State{\textbf{start} $view\_timer \gets timer((5+d)\Delta)$}\Comment{Timer for changing view}
\State{\textbf{send} \msg{Status}{v,lock} to $L_v$}
\EndEvent

\medskip
\Event{receiving $S \gets n-f$ \msg{Status}{v_i,-}}
\State{$val\gets$ value in the highest lock in $S$, or $input_i$ if all locks in $S$ are $\bot$}
\State{\textbf{send} \msg{Propose}{v_i,val,S} to all}
\EndEvent

\medskip
\Event{receiving \msg{Propose}{v_i,val,S} from $L_{v_i}$}
\If{$val$ matches the highest locked value in $S$ or all locks in $S$ are $\bot$}
\State{\textbf{echo} \msg{Propose}{v_i,val,S} to all} 
\State{\textbf{start} $vote\_timer \gets timer(d\Delta)$}\Comment{To detect equivocation}
\EndIf
\EndEvent

\medskip
\Event{$vote\_timer$ expiring and no equivocation detected}
\State{\textbf{send} \msg{Vote-1}{v_i,val} to all}
\EndEvent

\medskip
\Event{receiving $L\gets n-f$ \msg{Vote-1}{v_i,val}}
\State{$lock\gets L$}
\State{\textbf{send} \msg{Vote-2}{v_i,val} to all}
\EndEvent

\medskip
\Event{receiving $C\gets n-f$ \msg{Vote-2}{v_i,val} or one \msg{Commit}{C}}
\State{\textbf{send} \msg{Commit}{C} to all}
\State{\textbf{commit} $val$ and \textbf{terminate}} 
\EndEvent

\medskip
\Event{receiving \msg{Propose}{v_i,val,-} and \msg{Propose}{v_i,val',-} where $val'\neq val$}
\State{\textbf{echo} \msg{Propose}{v_i,val,-} and \msg{Propose}{v_i,val',-} to all}
\State{\textbf{send} \msg{ViewChange}{v_i} to all} 
\EndEvent

\medskip
\Event{$view\_timer$ expiring}
\State{\textbf{send} \msg{ViewChange}{v_i} to all}
\EndEvent

\medskip
\Event{receiving $VC \gets f+1$ \msg{ViewChange}{v} where $v>v_i$}
\State{\textbf{stop} sending \textsc{Vote-1}/\textsc{Vote-2} messages for views up to $v$} 
\State{\textbf{echo} $VC$ to all}
\State{\textbf{echo} \msg{Locked}{lock} to all}
\State{\textbf{wait} $2d\Delta$}
\State{\textbf{enter} view $v+1$}
\EndEvent

\medskip
\Event{receiving \msg{Locked}{lock'}}
\State{$lock\gets $ higher lock between $lock$ and $lock'$}
\State{\textbf{echo} \msg{Locked}{lock'} to all}
\EndEvent


\end{algorithmic}
\end{algorithm}

\subsection{Analysis}\label{bft-gps-analysis}

External validity is easily ensured if all correct nodes validate the proposed value before voting for it. In appendix~\ref{s:unanimity}, we show how to achieve the strong unanimity validity in definition~\ref{def:consensus}. 
We now focus on agreement and termination. 

\begin{lemma}\label{lem:hps-bft-agree-1}
If there exist $n-f$ \msg{Vote-1}{v,val} messages and $n-f$ \msg{Vote-1}{v,val'} messages in the same view $v$, then $val=val'$.
\end{lemma}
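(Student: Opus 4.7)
The plan is to derive a contradiction that marries the graph condition of Theorem~\ref{thm:bft-gps} with the $d\Delta$ equivocation timer in the vote-1 step. Suppose $val \neq val'$ and let $A$ (resp.\ $A'$) denote the set of correct nodes that sent \msg{Vote-1}{v,val} (resp.\ \msg{Vote-1}{v,val'}). Of the $n-f$ signers in each set at most $f$ are Byzantine, so $|A|,|A'|\ge n-2f$, and because a correct node sends at most one \textsc{Vote-1} per view the two sets are disjoint.

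Next I would apply Theorem~\ref{thm:bft-gps} to $A$ to obtain a set $B$ of correct nodes with $|B|\ge f+1$ and $A\rightarrow B$. Since $B$ and $A'$ both live inside the $n-f$ correct nodes and $|B|+|A'|\ge(f+1)+(n-2f)=n-f+1$, pigeonhole gives a node $q\in B\cap A'$. Pick $a\in A$ with $a\rightarrow q$; because the graph is undirected $q\rightarrow a$ as well, with both synchronous paths of length at most $d$.

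Then I would run the timing step. Let $t_a$ and $t_q$ be the moments at which $a$ and $q$ echoed the first valid \textsc{Propose} they saw (for $val$ and $val'$ respectively) and started their $d\Delta$ vote timers. Echoes propagate along synchronous paths within $d\Delta$ because the consistency predicate for echoing a \textsc{Propose} depends only on the quorum $S$ carried inside the message; every correct node on the path therefore reaches the same decision as the originator and re-broadcasts. Hence $a$'s echo reaches $q$ by $t_a+d\Delta$ and $q$'s echo reaches $a$ by $t_q+d\Delta$. For $q$ to vote at time $t_q+d\Delta$ without detecting equivocation one needs $t_q<t_a$; symmetrically, for $a$ to vote one needs $t_a<t_q$. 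The two inequalities contradict each other, forcing $val=val'$.

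The main obstacle I expect is pinning down the echo-propagation claim rigorously, since the protocol pseudocode only specifies what a node does upon first receiving a \textsc{Propose} from $L_{v_i}$ rather than spelling out multi-hop forwarding. The key saving observation is that the validity check for echoing is a function of $S$ alone, so correct intermediaries agree whether to forward, and FIFO/reliable links then deliver the echo in at most $d\Delta$ along any synchronous path. Once that step is in hand, the pigeonhole combined with the undirectedness of the graph and the timer comparison closes the argument cleanly, in close analogy with the role that the condition of Theorem~\ref{thm:cft-gps} played in Lemma~\ref{lem:gps-cft-3}.
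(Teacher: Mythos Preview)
Your proposal is correct and mirrors the paper's proof: expand the $\ge n-2f$ correct voters for $val$ via the condition of Theorem~\ref{thm:bft-gps} to a set of $f{+}1$ correct nodes, intersect with the voters for $val'$, and use the $d\Delta$ echo window to force a timing contradiction between the two echo times. The only cosmetic difference is that the paper intersects the expanded set $B$ with the full $(n{-}f)$-signer set $S'$ (inside the universe of all $n$ nodes) rather than with only its correct members $A'$, which sidesteps the implicit assumption in your pigeonhole step that there are exactly $n{-}f$ correct nodes.
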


\begin{proof}
Suppose for the sake of contradiction there exist a set $S$ of $n-f$ \msg{Vote-1}{v,val} messages and a set $S'$ of $n-f$ \msg{Vote-1}{v,val'} messages where $val\neq val'$. 

Of the $n-f$ nodes whose \textsc{Vote-1} messages are in $S$, at least a set $P$ of $n-2f$ must be correct. 
By the condition in theorem~\ref{thm:bft-gps}, $P \rightarrow H$ where $H$ is a set of $f+1$ correct nodes. 
Due to quorum intersection, $S'\cap H$ must contain at least one node, which is correct. 
Let $c'$ be this node.
Since the graph is undirected, there exists $c \in S$ such that $c' \rightarrow c$. 

Let $t$ be the time $c'$ starts its vote timer. 
At time $t$, $c'$ also forwards the \msg{Propose}{v,val',-} message to all nodes. 
By time $t+d\Delta$, $c$ receives this message.
Thus, $c$ must have sent \msg{Vote-1}{v,val} before time $t+d\Delta$.
Otherwise, $c$ would have detected leader equivocation and would not have voted. 
Then, $c$ must have forwarded \msg{Propose}{v,val,-} to all nodes before time $t$.
$c'$ receives this \msg{Propose}{v,val,-} message before time $t+d\Delta$, which is before its vote timer expires. 
Thus, $c'$ detects leader equivocation and would not have voted. 
This contradicts $c' \in S'$. 
\end{proof}




\begin{lemma}\label{lem:hps-bft-agree-2}
If some node commits $val$ in view $v$, then any set of $n-f$ \msg{Vote-1}{v',val'} messages (lock certificate) in view $v'\geq v$ must have $val'=val$.
\end{lemma}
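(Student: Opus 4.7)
The plan is to mirror the structure of lemma~\ref{lem:gps-cft-3} and proceed by strong induction on $v'$. The base case $v'=v$ is immediate: commit in view $v$ implies $n-f$ \msg{Vote-2}{v,val} messages, which in turn required $n-f$ \msg{Vote-1}{v,val} messages to form the corresponding lock, so any other set of $n-f$ \msg{Vote-1}{v,val'} messages in view $v$ must also satisfy $val'=val$ by lemma~\ref{lem:hps-bft-agree-1}.

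For the inductive step, I would assume the claim for every view in $[v,v'-1]$ and suppose for contradiction that a set of $n-f$ \msg{Vote-1}{v',val'} messages for some $val' \neq val$ exists in view $v'$. At least $n-2f$ of those votes come from correct nodes, each of which only voted after echoing a \msg{Propose}{v',val',S} from $L_{v'}$ whose value matched the highest lock in the status set $S$. Meanwhile, the commit at view $v$ supplies a set $R$ of at least $n-2f$ correct nodes that locked on $(v,val)$ before sending \msg{Vote-2}{v,val}; by the induction hypothesis every lock any correct node ever forms in views $[v,v'-1]$ is for $val$, so members of $R$ permanently hold a lock at view $\geq v$ for $val$.

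Let $P$ denote the correct nodes whose status messages sit in $S$, with $|P| \geq n-2f$. Applying the condition of theorem~\ref{thm:bft-gps} to $P$ yields a set $Q$ of at least $f+1$ correct nodes with $P \rightarrow Q$. Since $|Q|+|R| \geq (f+1)+(n-2f) = n-f+1$ exceeds the total count $n-f$ of correct nodes, pigeonhole forces some $q \in Q \cap R$; by $P \rightarrow Q$ together with the undirected-graph assumption there is also some $p \in P$ with $q \rightarrow p$. The remainder is the standard lock-propagation timing argument from lemma~\ref{lem:gps-cft-3}: if $t_p$ is the time $p$ echoed the $f+1$ \msg{ViewChange}{v'-1} messages that triggered its transition, those messages reach $q$ along the length-$d$ synchronous path within $d\Delta$, $q$ then echoes its current \textsc{Locked} message, and that lock arrives back at $p$ by $t_p+2d\Delta$, exactly when $p$'s $2d\Delta$ wait ends and $p$ enters view $v'$ with its status snapshot. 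Hence $p$'s status in $S$ carries a lock at view $\geq v$ for $val$; by the induction hypothesis this lock is the highest-ranked one in $S$, so no correct node echoes \msg{Propose}{v',val',S} in the first place, leaving at most $f < n-f$ \textsc{Vote-1} messages for $val'$ and contradicting the assumption.

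The main obstacle I expect is making the lock-propagation step airtight. I need the $d\Delta$ bound on the synchronous path to hold unconditionally (it does, straight from the definition of a synchronous link, so no appeal to GST is required); I need $q$'s lock at the instant of echoing to still have view $\geq v$ with value $val$, which is precisely what the induction hypothesis on intermediate views buys us since $q$ may have monotonically raised it since view $v$; and I need the $2d\Delta$ wait at $p$ to strictly cover the round trip on the synchronous path so that $p$'s status snapshot already reflects $q$'s contribution before $p$ enters view $v'$.
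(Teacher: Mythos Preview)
Your proposal is correct and follows essentially the same approach as the paper's proof: induction on $v'$, base case via lemma~\ref{lem:hps-bft-agree-1}, and in the inductive step apply the graph condition of theorem~\ref{thm:bft-gps} to the $\geq n-2f$ correct senders of \textsc{Status} messages in the proposal's set $S$ to obtain $f+1$ correct nodes, intersect with the $\geq n-2f$ correct \textsc{Vote-2} senders from view $v$, and use the $2d\Delta$ wait to propagate the lock back along the synchronous path. Two minor points worth tightening when you write it up: (i) make explicit that you fix \emph{one} set $S$ carried in a \textsc{Propose} accepted by some correct voter (a Byzantine leader could send different $S$'s to different nodes), and (ii) your pigeonhole step implicitly uses $|F|=f$, which the paper justifies separately by noting it never hurts a Byzantine adversary to corrupt the maximum number of nodes.
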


\begin{proof}
We prove this lemma by induction on view $v'$. 
The base case of $v'=v$ is straightforward by lemma \ref{lem:hps-bft-agree-1}. 
For the inductive step, suppose the lemma holds up to view $v'-1$, and now we consider view $v'$. Suppose for the sake of contradiction that some node commits $val$ in view $v$, and there exist $n-f>f$ nodes that send \msg{Vote-1}{v',val'} messages for $val' \neq val$.
A correct node will only send \msg{Vote-1}{v',val'} if a proposal carries in view $v'$ a set $S$ of \msg{Status}{v',-} messages. 
Thus, there exists a subset $H\subseteq S$ of $n-2f$ correct nodes which sent \msg{Status}{v',-}. 
By the condition in theorem~\ref{thm:bft-gps}, $H\rightarrow Q$, where $Q$ is a set of $f+1$ correct nodes.


Since a node committed $val$ in view $v$, there must exist some set $n-f$ nodes that sent \msg{Vote-2}{v, val}, of which a set $R$ of at least $n-2f$ are correct. 
Before sending \msg{Vote-2}{v,val} messages, these correct nodes updated $lock\coloneqq (v,val)$ in view $v$. Sets $Q$ and $R$ intersect in at least one correct node. Let this node be $q$. Since the graph is undirected and $H\rightarrow Q$, there must exist a node $h\in H$ such that $q\rightarrow h$. By the induction hypothesis, any lock certificate from view $v$ to $v'-1$ must be for $val$. Since a node only updates its lock monotonically based on view numbers, node $q$ must have a lock with view $\geq v$ for $val$. 
Let $t_h$ be the time node $h$ echoed $f+1$ \msg{ViewChange}{v'} messages. By time $t_h+d\Delta$, node $q$ must have received $f+1$ \msg{ViewChange}{v'} messages. Node $q$ will then echo a \msg{Locked}{lock} message to all nodes. This will be received by node $h$ by time $t_h+2d\Delta$. Node $h$ will update its lock to be at least view $v$ for $val$. 
Thus, from nodes in $H$, $L_{v'}$  receives at least one \textsc{Status} message for $val$ with view $\geq v$. By the induction assumption, any lock certificate not for $val$ must have view $<v$. 
Thus, no correct node sends \msg{Vote-1}{v',val'}, a contradiction.
\end{proof}



\begin{theorem}[Agreement]
No two correct nodes commit different values.    
\end{theorem}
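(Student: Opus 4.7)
The plan is to reduce the Agreement theorem to Lemma~\ref{lem:hps-bft-agree-2} by a short argument; the heavy lifting has already been done in Lemmas~\ref{lem:hps-bft-agree-1} and~\ref{lem:hps-bft-agree-2}.

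First, I would let $v$ be the smallest view in which any correct node commits, and let $val$ be the value that it commits. Suppose for contradiction that some other correct node commits $val' \neq val$ in some view $v' \geq v$. My goal is to produce a lock certificate for $val'$ in view $v'$ and then invoke Lemma~\ref{lem:hps-bft-agree-2} to derive $val' = val$, a contradiction.

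Second, I would observe that a correct node can commit $val'$ in view $v'$ in two ways: either it directly collects $n-f$ matching \msg{Vote-2}{v',val'} messages, or it receives a \msg{Commit}{C} message whose enclosed certificate $C$ consists of $n-f$ such \textsc{Vote-2} messages. In either case, $n-f$ \msg{Vote-2}{v',val'} messages from distinct nodes must exist in the system. Since at most $f$ nodes are Byzantine, at least $n-2f \geq 1$ of these \textsc{Vote-2} messages were sent by correct nodes. Pick any such correct node $c$. By the protocol, $c$ only sends \msg{Vote-2}{v',val'} after assembling a set $L$ of $n-f$ matching \msg{Vote-1}{v',val'} messages from distinct nodes, i.e., a lock certificate for $val'$ in view $v'$.

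Finally, I would apply Lemma~\ref{lem:hps-bft-agree-2} to the set $L$: since a correct node committed $val$ in view $v \leq v'$, the lemma forces the value in any $n-f$-sized set of \textsc{Vote-1} messages in view $v'$ to equal $val$. Thus $val' = val$, contradicting the assumption. I do not anticipate any real obstacle here; the delicate parts (equivocation avoidance in Lemma~\ref{lem:hps-bft-agree-1}, and lock carry-over across views via the $2d\Delta$ wait in Lemma~\ref{lem:hps-bft-agree-2}) are already handled, so the proof should fit in a single paragraph.
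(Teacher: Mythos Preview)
Your proposal is correct and follows essentially the same route as the paper: pick the smallest view $v$ in which a correct node commits, then argue that any other commit would entail a lock certificate (a set of $n-f$ \textsc{Vote-1} messages) in some view $v'\geq v$ for a different value, which Lemma~\ref{lem:hps-bft-agree-2} rules out. The paper's version is simply terser, leaving implicit the chain from \textsc{Commit}/\textsc{Vote-2} back to a \textsc{Vote-1} set that you spell out.
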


\begin{proof}
Let $v$ be the smallest view in which a correct node commits some value, say $val$. 
By lemma~\ref{lem:hps-bft-agree-2}, only $val$ can receive $n-f$ \msg{Vote-1}{v} messages in any view $v' \geq v$, so no other value can be committed by a correct node.
\end{proof}



\begin{theorem}[Termination]
All correct nodes eventually decide.    
\end{theorem}

\begin{proof}
With round-robin leader election, correct nodes are elected leaders infinitely often. Thus, there must be a view $v$, after $GST+2d\Delta$, whose leader is correct. We next prove that all nodes will decide and terminate in view $v$ (if they don't decide earlier).

Let $t$ ($t \geq GST+2d\Delta$) be the first time some correct node enters view $v$. 
This correct node echoes $f+1$ \msg{ViewChange}{v-1} messages to all nodes at $t-2d\Delta \geq GST$. 
All correct nodes will receive the new view certificate by time $t-2d\Delta+\Delta$, wait $2d\Delta$ themselves, and enter view $v$ by time $t+\Delta$. 
Upon entering view $v$, they send \msg{Status}{v,-} messages to $L_v$. 
$L_v$ receives $n-f$ \msg{Status}{v,-} messages by time $t+2\Delta$, and send a \msg{Propose}{v,-} message to all nodes. 
All correct nodes will receive the \msg{Propose}{v,-} message by time $t+3\Delta$ and start their vote timers. 
Since $L_v$ is correct and does not equivocate, all correct nodes will send a \msg{Vote-1}{v,-} message by time $t+(3+d)\Delta$. 
All correct nodes will receive $n-f$ \msg{Vote-1}{v,-} messages by time $t+(4+d)\Delta$, and send a \msg{Vote-2}{v,-} message.
All correct nodes will receive $n-f$ \msg{Vote-2}{v,-} messages and commit by time $t+(5+d)\Delta)$. 
Since a node's view timer is $(5+d)\Delta$, and changing views requires $f+1$ \msg{ViewChange}{v} messages, all correct nodes will remain in view $v$, commit and terminate in view~$v$.
\end{proof}

\section{Related Work}

Necessary and sufficient conditions to solve consensus in all three classic timing models have been long established~\cite{byzantinegenerals,flm,dolevstrong,flp,bracha1987asynchronous,dls}.
There is also a large body of work on CFT
and BFT consensus protocols
in all three timing models. 
Our protocols adopt standard techniques from previous protocols such as quorum intersection~\cite{paxos,vr,castro1999pbft},
synchronous equivocation detection~\cite{katzkoo,synchs,good-case}, and view synchronizers~\cite{alexey-live,alexey-live-2}.


Weaker models than synchrony have been suggested in the literature. 
Some of these are orthogonal to the timing model.
A line of work studies consensus on \emph{incomplete} communication graphs~\cite{directed, undirected, undirected-asynchronous}. 
The mobile link failure model~\cite{mobilelinkfailure} allows a bounded number of lossy links.
These models are orthogonal because they still need to adopt one of the classic timing models for the links that exist in the graph and are not lossy. 
The mobile sluggish model~\cite{mobilesluggish} allows temporary unbounded message delays for a set of honest nodes (the set can change over time). 
The sleepy model~\cite{sleepy} allows a large fraction of nodes to be inactive. 
Both are models of node failures. 
Correct nodes that are not sluggish/sleepy are still assumed to have pair-wise synchronous links with each other.

The Visigoth fault tolerance (VFT) paper~\cite{visigoth} proposes a timing model that consists of synchronous and asynchronous links. 
Their model assumes every node has asynchronous links to at most $s$ correct nodes and synchronous links to the remaining nodes. For CFT, VFT requires $n-s\geq f+1$, so every node must have at least $f+1$ synchronous links. For BFT, VFT requires every node to have $n-s\geq 2f+1$ synchronous links. In contrast, our graph conditions are weaker (less restrictive) in that they only require a set of $n-f$ nodes for CFT ($n-2f$ correct nodes for BFT) to have synchronous paths to at least $f+1$ nodes ($f+1$ correct nodes for BFT). We additionally consider partially synchronous edges.


Another line of work that considers a mixture of links studies the minimal condition to circumvent the FLP~\cite{flp} impossibility and solve consensus deterministically~\cite{few-sync-links, little-sync, minimal-sync, necessary-byz, marcos-minimal}. 
Many of these works~\cite{few-sync-links, little-sync, minimal-sync} consider the harder setting of directed graphs, while we only consider undirected graphs. 
Since they focus on circumventing FLP, they only consider a mixture of asynchronous and partially synchronous links, but no synchronous links.
Our main focus is to use synchronous links to achieve better fault tolerance than those under partial synchrony.
But as mentioned, when $n>2f$ for crash and $n>3f$ for Byzantine, our ``safety-critical'' condition becomes vacuous, and our model degenerates to a mixture of partially synchronous and asynchronous links. 
In this context, our work establishes the minimum condition for circumventing FLP for CFT consensus in undirected graphs.

\section{Conclusion}
This paper introduces the \sys{} model that considers a mixture of synchronous, partially synchronous, and asynchronous links to better capture the heterogeneity of modern networks. We present necessary and sufficient conditions for solving crash and Byzantine consensus in \sys. Our results show that consensus is solvable in the presence of $f \geq n/2$ crash faults and $f \geq n/3$ Byzantine faults in \sys, even though not all links are synchronous.

\paragraph{Acknowledgement.}
This work was started while Ittai Abraham, Neil Giridharan and Ling Ren were at VMware Research. This work is funded in part by the National Science Foundation award \#2143058.

\bibliographystyle{plainurl}
\bibliography{references}

\appendix
\section{BFT Consensus in \gascaps}
We present a sufficient condition for solving BFT consensus in \gas. 

\begin{theorem}\label{thm:bft-gpa}
If (i) the condition in theorem~\ref{thm:bft-gps} holds and 
(ii) for all $F$ with $|F|=f$, there exists a node in graph $G'=(V-F,\diamond E)$, which has partially synchronous paths to $f$ other nodes in $G'$,
then BFT consensus on graph $G=(V,E)$ can be solved deterministically under \gas. 
\end{theorem}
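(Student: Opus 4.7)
The plan is to adapt Algorithm~\ref{alg:gps-bft} for \gas{} along the same lines that Algorithm~\ref{alg:gas-cft} adapts the CFT case. Concretely, each node sends \textsc{Status} to all and echoes \textsc{Status} messages upon collecting $n-f$ of them, starting a proposal timer of roughly $3d'\Delta$ at that point in place of the fixed view timer. A \textsc{ViewChange} aggregates into a view transition only after $n-f$ such messages (not $f+1$), so that the adversary cannot combine Byzantine votes with a handful of correct nodes that are stranded behind asynchronous links. The $d\Delta$ equivocation check, locking, voting, commit, and lock propagation on view change remain unchanged from Algorithm~\ref{alg:gps-bft}.

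Safety is inherited directly. Lemmas~\ref{lem:hps-bft-agree-1} and~\ref{lem:hps-bft-agree-2} use only the graph condition of Theorem~\ref{thm:bft-gps} (i.e., condition~(i) here) together with the $d\Delta$ delivery bound over synchronous paths. Both continue to hold in \gas{}, since synchronous links still behave synchronously and condition~(i) holds by assumption, so the agreement proof and the external validity argument carry over without modification.

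For termination, I would first prove an analog of Lemma~\ref{lemma:cft-gas-liveness}: if no correct node ever commits, every correct node enters arbitrarily high views. Combining this with round-robin leadership and condition~(ii), there must be a view $v$ after $GST+2d\Delta$ whose leader $L_v$ is the ``star'' node from condition~(ii), with partially synchronous paths to a set of $f$ other correct nodes; together they form a cluster $C$ of $f+1$ correct nodes. The crux is to show no member of $C$ ever sends \msg{ViewChange}{v}. Because \textsc{Status} messages are echoed within $C$ along partially synchronous paths, once any cluster member has $n-f$ \textsc{Status} messages, $L_v$ has them within $d'\Delta$; $L_v$ then proposes, the proposal reaches every cluster member within another $d'\Delta$, the $d\Delta$ equivocation timer detects no conflict (since $L_v$ is correct), and \textsc{Vote-1} is sent well before the $3d'\Delta$ proposal timer could expire. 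The correct nodes outside $C$ number at most $n-2f-1$, so together with the $f$ Byzantine nodes they produce at most $n-f-1<n-f$ \textsc{ViewChange} messages, which is insufficient to change the view. Meanwhile, asynchronous links eventually deliver the proposal echoes and remaining \textsc{Vote-1}s from outside $C$, yielding $n-f$ \textsc{Vote-1}s, then $n-f$ \textsc{Vote-2}s, and finally a \textsc{Commit}, contradicting non-termination.

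The main obstacle I anticipate is ruling out indefinite stalls in views where the leader is not the star. A Byzantine leader can ``half-satisfy'' the network by sending a valid proposal to a subset of correct nodes that is too large to produce $n-f$ \textsc{ViewChange}s but too small to form a $n-f$-sized \textsc{Vote-1} quorum directly (most pressingly when $n>3f+1$). The resolution rests on two observations: (a) sending \textsc{ViewChange} in \gas{} does not bar a node from voting unless a full view transition actually occurs, and (b) any \textsc{Propose} echoed by even a single correct recipient is eventually delivered over asynchronous links to every correct node. Hence $n-f$ \textsc{Vote-1}s inevitably accumulate in finite (if unbounded) time, keeping the Lemma~\ref{lemma:cft-gas-liveness} analog true and guaranteeing that the execution eventually reaches the good star view where progress is actually forced.
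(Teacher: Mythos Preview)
Your proposal is correct and follows essentially the same approach as the paper: the algorithmic modifications (broadcast \textsc{Status}, echo on $n-f$, proposal timer of $3d'\Delta$, and an $n-f$ threshold for \textsc{ViewChange}) are exactly those the paper makes, safety is inherited from condition~(i) via Lemmas~\ref{lem:hps-bft-agree-1} and~\ref{lem:hps-bft-agree-2}, and termination is argued via an analog of Lemma~\ref{lemma:cft-gas-liveness} together with the ``star'' view guaranteed by condition~(ii). The only cosmetic difference is that in the good view the paper derives a contradiction directly from the Lemma~\ref{lemma:cft-gas-liveness} analog (no $n-f$ \textsc{ViewChange}s ever, hence no higher view), whereas you additionally argue commit happens explicitly; both are sound, and your ``main obstacle'' paragraph is precisely the content of the paper's Lemma~\ref{lemma:bft-gas-liveness}.
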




We have proved the necessity of condition (i) (for all algorithms) in Section~\ref{s:hps-bft-necessary}.
Condition (ii) was proven necessary in~\cite{necessary-byz} for algorithms that work for the family of all graphs that satisfy the condition (i.e., graph-agnostic algorithms).
If algorithms can be tailored to the graph, the tight condition for Byzantine consensus remains open.

\subsection{Protocol}

Next, we adapt our previous BFT consensus protocol in algorithm~\ref{alg:gps-bft} from \gps{} to \gas{}, assuming the condition in theorem~\ref{thm:bft-gpa} holds. This establishes the sufficiency of the condition.

As with our \gas{} CFT consensus protocol, we utilize condition (ii) in theorem~\ref{thm:cft-gpa} to guarantee that, when the correct node with partially synchronous paths to $f$ other nodes in $G'=(V-F,\diamond E)$ is elected leader after GST, this leader will not be replaced and will make progress. 
To do so, we first require $n-f$ nodes to initiate a view change, of which at least $n-2f$ must be correct.
This way, because fewer than $n-2f$ correct nodes are asynchronously connected to the leader, we just need to make sure that none of the $f$ nodes the leader is connected to via partially synchronous paths initiates a view change. 



\begin{algorithm}[tb]
\caption{BFT consensus protocol in \gas{} for node $i$}\label{alg:gas-bft}
\begin{algorithmic}[1]
\State{$v_i \gets 0,~lock \gets \bot$}\Comment{Initialize local view number and lock}
\State{\textbf{enter} view 1}    

\medskip
\Event{entering view $v$}
\State{$v_i \gets v$}
\State{\textbf{send} \msg{Status}{v,lock} to all}
\EndEvent

\medskip
\Event{receiving $n-f$ \msg{Status}{v_i,-} messages where $i\neq L_v$}
\State{\textbf{echo} these $n-f$ \msg{Status}{v_i,-} to all}
\State{\textbf{start} $proposal\_timer \gets timer(3d'\Delta)$}\Comment{Timer before changing view}
\EndEvent

\medskip
\Event{$proposal\_timer$ expiring and no leader proposal received}
\State{\textbf{send} \msg{ViewChange}{v_i} to all}
\EndEvent

\medskip
\State{\textsc{Proposal, Vote-1, Vote-2, Commit} messages, $n-f$ \textsc{ViewChange} messages (instead of $f+1$), equivocation detection at all nodes, and \textsc{Status} messages at view leaders are processed the same way as in Algorithm~\ref{alg:gps-bft}}

\end{algorithmic}
\end{algorithm}

We only describe the status and view change steps, since the rest of the protocol remains the same as algorithm~\ref{alg:gps-bft}.

\par \underline{\textit{Status step.}}
Upon entering a new view $v$, a node sends a \msg{Status}{v,lock} message to \textbf{all} nodes. When a node receives at least $n-f$ \msg{Status}{v,-} messages, it forwards this set of \textsc{Status} messages to all nodes and starts a timer with $3d'\Delta$ duration. The same propose, vote, and commit steps from algorithm~\ref{alg:gps-bft} follow.

\par \underline{\textit{View change.}} A node suspects the leader is faulty if it does not receive a \msg{Propose}{v,-,-} message before its proposal timer (instead of view timer) expires. 
A view change certificate consists of $n-f$ \msg{ViewChange}{v} messages (instead of $f+1$ in algorithm~\ref{alg:gps-bft}). Upon receiving $n-f$ \msg{ViewChange}{v}, a node carries out the same waiting period step from algorithm \ref{alg:gps-cft}.

\subsection{Analysis}

The agreement and validity proofs are identical to the \gps{} BFT case. We focus on termination. 

\begin{lemma}
If no correct node ever terminates, then every correct node keeps entering higher views.
\label{lemma:bft-gas-liveness}
\end{lemma}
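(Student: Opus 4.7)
The plan is to mirror the structure of the proof of Lemma~\ref{lemma:cft-gas-liveness}, adapting it to handle the Byzantine-specific complications. I proceed by contradiction: assume no correct node ever terminates, yet some correct node $n_1$ stays in some view $v$ forever, and derive a contradiction by showing that view $v$ must in fact commit.

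First I would establish two structural facts exactly as in the CFT case: (i) no correct node ever enters a view higher than $v$, for otherwise it would echo its $n-f$ \textsc{ViewChange} messages and by reliable delivery $n_1$ would receive them and advance; and (ii) every correct node eventually enters view $v$, because $n_1$ itself echoed the $n-f$ \textsc{ViewChange}$(v{-}1)$ messages that brought it there, and once in view $v$ each correct node broadcasts \textsc{Status}, eventually collects $n-f$ of them, and starts its proposal timer. Combining these, I bound \textsc{ViewChange}$(v)$ traffic: if $n-f$ such messages were ever sent, $n_1$ would receive them and advance, so at most $n-f-1$ of the $\geq n-f$ correct nodes ever send \textsc{ViewChange}$(v)$, and in particular at least one correct node never does.

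Next I would rule out equivocation: if any correct node detected two conflicting \textsc{Propose} messages in view $v$, its forwarding rule would propagate both to all correct nodes, causing all of them eventually to detect equivocation and send \textsc{ViewChange}$(v)$, contradicting the bound above. Hence no correct node ever detects equivocation, and the correct node(s) that do not send \textsc{ViewChange}$(v)$ must have received and echoed a valid \textsc{Propose}$(v,\cdot,\cdot)$ from $L_v$ before their proposal timer expired. By reliable delivery, every correct node eventually receives this \textsc{Propose}, echoes it, starts its vote timer, and (since no equivocation is ever detected) sends \textsc{Vote-1}$(v,\cdot)$ after $d\Delta$. Because fewer than $n-f$ \textsc{ViewChange}$(v)$ messages exist, the Vote-1/Vote-2 suppression rule is never triggered, so all correct nodes eventually collect $n-f$ \textsc{Vote-1}s, progress to \textsc{Vote-2}, collect $n-f$ \textsc{Vote-2}s, and commit, contradicting the assumption that no correct node terminates.

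The main obstacle compared with the CFT analogue is the handling of equivocation: I must argue carefully that a single correct node's detection of equivocation propagates via the forwarding rule to every correct node and thereby violates the \textsc{ViewChange} bound, and, symmetrically, that the Vote-1/Vote-2 suppression rule triggered by $n-f$ \textsc{ViewChange} messages never fires in view $v$, so the commit pipeline is not inadvertently shut down while the \textsc{Propose} is being re-propagated to lagging correct nodes.
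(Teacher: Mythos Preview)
Your proposal is correct and takes essentially the same approach as the paper: contradiction via a stuck correct node $n_1$ in view $v$, showing all correct nodes reach $v$ but none go beyond, bounding the number of \textsc{ViewChange}$(v)$ messages so that some correct node must have echoed a valid \textsc{Propose}, ruling out equivocation via the forwarding rule, and then running the \textsc{Vote-1}/\textsc{Vote-2}/\textsc{Commit} pipeline to completion. Your explicit observation that the \textsc{Vote-1}/\textsc{Vote-2} suppression rule (triggered by $n-f$ \textsc{ViewChange} messages) never fires is a detail the paper leaves implicit but is otherwise the same argument.
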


\begin{proof}
Suppose for the sake of contradiction, there exists a correct node $n_1$, which never enters a higher view. Let $v$ be the view $n_1$ is in.
If any correct node ever enters a view $v'>v$, it must have echoed $n-f$ \msg{ViewChange}{v'-1} messages to all nodes. $n_1$ will eventually receive this set $n-f$ \msg{ViewChange}{v'-1} messages and enter a higher view, a contradiction. 
Thus, no correct node ever enters a view higher than $v$.
Before entering view $v$, $n_1$ must have sent $n-f$ \msg{ViewChange}{v-1} to all nodes. All correct nodes will eventually receive this set of \msg{ViewChange}{v-1} messages, enter view $v$, and send a \msg{Status}{v,-} message. Eventually, correct nodes will receive $n-f$ \msg{Status}{v,-} messages and start their proposal timers. If $n_1$ receives $n-f$ \msg{ViewChange}{v} messages, it will enter view $v+1$, a contradiction. Thus $n_1$ never receives $n-f$ \msg{ViewChange}{v} messages. 
Then, there must be at least one correct node, $n_2$, which never sends \msg{ViewChange}{v}, and instead echoes \msg{Propose}{v,-,-} to all nodes. Eventually, all correct nodes will receive a \msg{Propose}{v,-,-} message and echo it. If a correct node detects leader equivocation, it will forward it to all correct nodes. $n_2$ will eventually receive the conflicting \textsc{Propose} messages and send a \msg{ViewChange}{v} message, a contradiction. Thus, no correct node will detect leader equivocation. Then, all correct nodes will send \msg{Vote-1}{v,-} messages to all nodes. Eventually all correct nodes will receive $n-f$ \msg{Vote-1}{v,-} messages, and send a \msg{Vote-2}{v,-} message. Eventually, $n_1$ will receive $n-f$ \msg{Vote-2}{v,-} messages, commit and terminate, a contradiction.
\end{proof}

\begin{theorem}
All correct nodes eventually terminate.    
\end{theorem}

\begin{proof}
Suppose for the sake of contradiction that some correct node never terminates.
Observe that if one correct node terminates, it sends a \textsc{Commit} message and makes all correct nodes eventually terminate.  
Thus, no correct node ever terminates. 
By lemma~\ref{lemma:bft-gas-liveness}, every correct node keeps entering higher views.

Eventually, there will be a first time after $GST+2d\Delta$ that some correct node enters a view $v$ such that (i) $L_v \not\in F$, and (ii) $L_v$ has paths to at least $f$ other nodes in graph $G'=(V-F,\diamond E)$.
Let $C$ denote this set of nodes including $L_v$. 
We next prove no node in $C$ will ever send \msg{ViewChange}{v}.

Let $p$ be the first node in $C$ that enters view $v$, and let $p$ enter view $v$ at time $t > GST+2d\Delta$. 
Observe that no node in $C$ will send \msg{ViewChange}{v} before time $t+3d'\Delta$ (proposal timer duration is $3d'\Delta$).
Due to the condition in theorem~\ref{thm:bft-gpa}, $n-|C|<n-f$.
Thus, there will not be $n-f$ \msg{ViewChange}{v} messages before $t+3d'\Delta$. 

$p$ sends $n-f$ \msg{ViewChange}{v-1} messages at time $t-2d\Delta>GST$. 
All nodes in $C$ receive $n-f$ \msg{ViewChange}{v-1} messages by time $t-2d\Delta+d'\Delta$, enter view $v$ by time $t+d'\Delta$, and stay in view $v$ at least until time $t+3d'\Delta$.

When a node $q \in C$ receives $n-f$ \msg{Status}{v,-} messages at time $t'>t$, $q$ echoes these $n-f$ messages and starts its proposal timer. 
All nodes in $C$ enter view $v$ by time $t+d'\Delta$ and are ready to echo these \msg{Status}{v,-} messages by $t+d'\Delta$.
$L_v$, which is in $C$, receives these $n-f$ \msg{Status}{v,-} messages by time $\max(t+2d'\Delta, t'+d'\Delta)<t'+2d'\Delta$.
$L_v$ sends a \msg{Propose}{v,-} message by time $t'+2d'\Delta$ and it reaches $q$ by time $t'+3d'\Delta$, which is before $q$'s proposal timer expires. 
Thus, $q$ does not send \msg{ViewChange}{v}. 
This establishes that no node in $C$ will ever send \msg{ViewChange}{v}.

Since $n-|C|<n-f$, there will never be $n-f$ \msg{ViewChange}{v} messages.
Thus, no correct node ever enters a view higher than $v$.
This contradicts lemma~\ref{lemma:bft-gas-liveness}.
\end{proof}
\section{Comparison with~\cite{marcos-minimal}}\label{s:marcos-minimal}

\cite{marcos-minimal} showed that a correct $\diamond f$-source is a sufficient condition for solving CFT consensus in a directed graph. A correct $\diamond f$-source is a correct node that has $f$ outgoing  
fault-free paths that are eventually synchronous.
\cite{marcos-minimal} argued the potential optimality of their result by showing that every node being a $\diamond(f-1)$-source is not sufficient for solving CFT consensus.
Our results show that, at least in the case of undirected graphs, a correct $\diamond f$-source is not necessary. 
Our condition (ii) in theorem \ref{thm:cft-gpa} is weaker and is sufficient.

To show our condition is weaker, we first prove that a correct $\diamond f$-source implies the condition (ii) in theorem \ref{thm:cft-gpa}. 
Let $C$ be the connected component in $G'=(V,\diamond E)$ that the correct $\diamond f$-source belongs to. 
We have $|C|\geq f+1$. 
Removing $F\cup C$ must leave at most $n-f-1$ nodes in the remaining graph. 

\begin{figure}
    \centering
    \includegraphics[width=6cm,height=9cm,keepaspectratio]{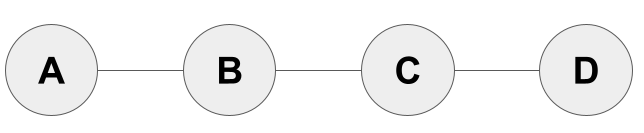}
    \qquad

    \caption{In this graph $n=4$ and $f=2$. Each edge represents a synchronous link and a missing edge represents an asynchronous link.}
    \label{fig:cft-weaker-figure}
\end{figure}

Next, Figure \ref{fig:cft-weaker-figure} shows an example of a graph that satisfies our condition but does not have a correct $\diamond f$-source. For this graph, if the adversary corrupts $B$ and $C$, then there is no correct $\diamond f$-source since $A$ only has a link to $B$ and $D$ only has a link to $C$. This graph, however, satisfies the condition (ii) in theorem \ref{thm:cft-gpa}. If $|F|=0$, removing the largest connected component (the entire graph) leaves 0 nodes, satisfying the condition. For any choice of $F$ with $|F|=1$, the largest connected component after removing $F$ must be of size at least $2$. Thus, there will be at most $1$ remaining node, satisfying the condition.
For any choice of $F$ such that $|F|=2$, the largest remaining connected component must be of size at least $1$. Thus, there will be at most $1$ remaining node, satisfying the condition.
\section{BFT Unanimity Validity}
\label{s:unanimity}

\begin{algorithm}[tb]
\caption{BFT Unanimity Validity}\label{alg:bft-unanimity-validity}
\begin{algorithmic}[1]
\State{$v_i \gets 0,~lock \gets \bot$}\Comment{Initialize local view number and lock}
\State{$inputs\gets \{\}$}
\medskip
\State{\textbf{echo} \msg{Input}{input_i} to all}
\State{\textbf{start} $input\_timer \gets timer(2d\Delta)$}

\medskip
\Event{receiving $m\gets$ \msg{Input}{input_j}}
\State{\textbf{echo} $m$}
\State{$inputs\gets inputs \cup \{m\}$}
\EndEvent

\medskip
\Event{$input\_timer$ expiring}
\State{\textbf{send} \msg{Forward-Inputs}{inputs} to all}
\EndEvent

\medskip
\Event{receiving $FI \gets n-f$ \msg{Forward-Inputs}{inputs}}
\If{having received $I\gets f+1$ \msg{Input}{val} messages in $FI$}
\State{$lock\gets I$}
\EndIf
\EndEvent

\medskip
\State{\textbf{enter} view $1$}


\medskip

\end{algorithmic}
\end{algorithm}

In this section, we give a way to convert our BFT algorithms from external validity to strong unanimity validity. 
The idea is to try to have nodes lock before starting the first view, and if all correct nods have the same input, then that input is the only lock. 


\begin{lemma}\label{bft-validity-1}
If all correct nodes have the same input, then all correct nodes will lock on this value before entering view $1$, and any lock in view $0$ must be for $val$.
\end{lemma}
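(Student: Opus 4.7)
My plan for this lemma is to use the signed input-gathering preamble that precedes entry to view~1 in Algorithm~\ref{alg:bft-unanimity-validity}. Two structural facts drive the argument: (i) every one of the at least $n-f$ correct nodes signs and echoes $\msg{Input}{val}$, and (ii) $\msg{Input}{\cdot}$ messages are signed, so the number of distinct valid $\msg{Input}{val'}$ messages for any $val' \neq val$ is at most $f$.

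For the first conclusion, that every correct node locks on $val$ before \textbf{enter view 1}, I would argue as follows. Because every correct node echoes its own $\msg{Input}{val}$ and re-echoes any \textsc{Input} message it receives, these signed messages propagate through the graph along synchronous and partially synchronous paths. Once the preamble is executed after GST, the $2d\Delta$ $input\_timer$ window is long enough (matching the round-trip delay through the partially synchronous diameter $d$) for each correct node's $inputs$ set to contain all $n-f \geq f+1$ signed $\msg{Input}{val}$ messages at timer expiration. The resulting $\msg{Forward-Inputs}{inputs}$ message of each correct node therefore carries at least $f+1$ matching $\msg{Input}{val}$ signatures. When a correct node later aggregates $n-f$ \textsc{Forward-Inputs} messages, at least $n-2f \geq 1$ of them come from correct senders and contain $f+1$ matching $\msg{Input}{val}$ messages, so the lock rule fires with value $val$ and the lock is installed before the control flow reaches \textbf{enter view 1}.

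For the second conclusion, that any lock formed in view 0 must be for $val$, I would rely on signature counting. The lock rule requires $f+1$ valid signed $\msg{Input}{val'}$ messages from distinct senders; since all $n-f$ correct nodes sign only $\msg{Input}{val}$, any other value can be signed by at most the $f$ faulty nodes, strictly below the $f+1$ threshold. Hence no $\msg{Forward-Inputs}{}$ aggregate can trigger a lock on any $val' \neq val$.

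The main obstacle is the timing in the first conclusion: the $2d\Delta$ timer is measured locally, but GST may fall arbitrarily later than when a correct node starts its timer, so in the worst case the timer expires before any \textsc{Input} message from a remote correct node has arrived. To close this gap cleanly, I anticipate interpreting the preamble as being re-executed or reset upon detecting GST-like stability, or aligning its execution with a view-synchronizer mechanism so that every correct node's $inputs$ window straddles a sufficiently long synchronous interval after GST. A smaller subtlety worth stating explicitly is that ``$f+1$ matching Input messages'' must be read as $f+1$ with distinct signers, so that faulty duplication cannot fabricate the threshold for an adversarial value.
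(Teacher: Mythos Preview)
Your second conclusion (the signature-counting argument that no view-0 lock can be on $val'\neq val$) is fine and matches the paper. The first conclusion, however, has a genuine gap, and the ``obstacle'' you flag at the end is a symptom of it rather than a residual subtlety.

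You argue that after GST the $2d\Delta$ window lets every correct node gather all $n-f$ correct \textsc{Input} messages via partially synchronous paths, and you identify $d$ with the partially synchronous diameter. Both points are off. In the paper $d$ is the \emph{synchronous} diameter (the partially synchronous one is $d'$), and the preamble is meant to work \emph{regardless of GST}. More importantly, nothing in the model guarantees that an arbitrary correct node has a synchronous (or even short partially synchronous) path to every other correct node, so the claim that each correct node's $inputs$ set contains all $n-f$ correct \textsc{Input} messages is simply not supported by the graph assumptions. This is exactly why you end up needing GST and then run into the timer-vs-GST mismatch.

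The missing idea is to invoke the graph condition of Theorem~\ref{thm:bft-gps}. Fix a correct node $c$ and the set $FI$ of $n-f$ \textsc{Forward-Inputs} it collects; at least $n-2f$ of these come from a set $B$ of correct senders. By the theorem, $B\rightarrow C$ for some set $C$ of $f+1$ correct nodes, and since edges are undirected the same synchronous paths carry each $c'\in C$'s \textsc{Input} back to some $b\in B$ within the $2d\Delta$ window (this uses only synchronous links, so GST is irrelevant). Hence the union of the $inputs$ sets reported by $B$ contains $f+1$ correct \textsc{Input} messages for $val$, which triggers the lock at $c$. Your proposal never uses the $n-2f\to f+1$ expansion that is the whole point of the granular-synchrony condition, and without it the argument does not close.
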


\begin{proof}
In view $0$, all correct nodes send their inputs and echo other nodes' inputs they receive (using \textsc{Input} and \textsc{Forward-Inputs} messages) before their input timer expires in $2d\Delta$ time. For any two correct nodes $p$ and $q$ such that $p\rightarrow q$, $p$ will receive $q$'s input before $p$'s input timer expires. Similarly, $q$ will receive $p$'s input before $q$'s input timer expires. Consider any correct node $c$. Node $c$ will eventually receive a set $A$ of $n-f$ \msg{Forward-Inputs}{inputs} messages. 
Among them, a subset $B$ of $n-2f$ are from correct nodes. By the condition in theorem \ref{thm:bft-gps}, $B\rightarrow C$ where $C$ is a set of $f+1$ correct nodes. Since every node in $B$ waits $2d\Delta$ before sending a \textsc{Forward-Inputs} message, this is sufficient time for each node in $C$ to receive an input from some node in $B$ and also sends its input to that node in $B$.
Thus, $B$ will contain the input values from $C$, a set of $f+1$ correct nodes. If all correct nodes have the input $val$, node $c$ must receive at least $f+1$ \msg{Input}{val} messages, and there are at most $f$ \textsc{Input} messages for a different value (from $f$ Byzantine nodes).
Therefore, every correct node will set its lock to $I\gets f+1$ \msg{Input}{val} in view $0$, and any lock in view 0 must be for $val$.
\end{proof}

\begin{lemma}\label{lem:bft-validity-2}
If all correct nodes have the same input, then any lock in view $v\geq 0$ must be for $val$.
\end{lemma}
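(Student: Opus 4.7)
The natural plan is induction on the view $v$, with Lemma~\ref{bft-validity-1} providing the base case $v=0$. For the inductive step I would assume that every lock formed in any view $0,1,\ldots,v-1$ is for $val$ and argue that any lock in view $v$ must also be for $val$.

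For a lock to exist in view $v$ there must be $n-f$ signed \msg{Vote-1}{v,val'} messages, at least $n-2f$ of which come from correct nodes. A correct node only sends \msg{Vote-1}{v,val'} after receiving a \msg{Propose}{v,val',S} from $L_v$ in which $S$ is a set of $n-f$ signed \textsc{Status} messages and either $val'$ matches the highest-ranked lock reported in $S$ or every lock in $S$ is $\bot$. By Lemma~\ref{bft-validity-1} every correct node installs a view~$0$ lock for $val$ before entering view~$1$, and correct nodes only update their locks monotonically by view. Combined with the induction hypothesis, this means every lock a correct node ever sends in a \textsc{Status} message in view $v$ is for $val$ and is non-$\bot$. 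Since at least $n-2f\geq 1$ Status messages in $S$ come from correct nodes, the ``all $\bot$'' branch of the proposal rule is ruled out for every $v\geq 1$.

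It remains to rule out the possibility that a Byzantine node contributes a Status message in $S$ carrying a lock from a view $<v$ for some $val''\neq val$ that outranks the correct nodes' locks for $val$. This is where unforgeability of signatures does the work: a lock is a certificate of $n-f$ signed \textsc{Vote-1} messages (or, for view $0$, $f+1$ signed \textsc{Input} messages), so any lock a Byzantine node presents must be a genuine lock that actually formed in some earlier view. The induction hypothesis then forces that lock to be for $val$. Thus the highest-ranked lock in $S$ is for $val$, the leader's proposal in view $v$ must be $val$, no correct node will send \msg{Vote-1}{v,val'} for $val'\neq val$, and no lock for $val'\neq val$ can form in view $v$.

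The main obstacle is the ``all locks are $\bot$'' escape hatch in the proposal rule, which in principle could let a Byzantine leader in view~$1$ propose an arbitrary value before any \textsc{Vote-1}-style lock has had a chance to form. This is precisely why the view~$0$ \textsc{Input}/\textsc{Forward-Inputs} phase of Algorithm~\ref{alg:bft-unanimity-validity} is needed: Lemma~\ref{bft-validity-1} guarantees that every correct node enters view~$1$ already holding a non-$\bot$ lock for $val$, so the escape hatch is closed from view~$1$ onward and the induction goes through cleanly.
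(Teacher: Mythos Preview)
Your proposal is correct and follows essentially the same approach as the paper: induction on the view with Lemma~\ref{bft-validity-1} as the base case, then arguing that the set $S$ in any accepted \textsc{Propose} must contain a non-$\bot$ lock from a correct node (ruling out the ``all $\bot$'' branch) and that any lock appearing in $S$ is genuine and hence, by induction, for $val$. Your write-up is in fact more explicit than the paper's in spelling out why at least one correct node must have voted and why unforgeability of the certificate prevents Byzantine nodes from fabricating a higher-ranked lock for a different value.
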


\begin{proof}
The base case is established by lemma~\ref{bft-validity-1}.
Now assume the lemma holds for all $v-1$, and consider view $v$. Suppose for the sake of contradiction a lock forms for $val'\neq val$. $L_v$ must have proposed $val'\neq val$. By the induction assumption, any lock must be for $val$. Thus, $L_v$ must have received $S\gets n-f$ \textsc{Status} messages where all locks are $\bot$. By lemma \ref{bft-validity-1}, all correct nodes will lock on $val$ before entering view $1$. The set $S$ must contain a \textsc{Status} message from at least one correct node. This correct node will at least have a lock in view $0$ or higher, and thus its \textsc{Status} message will not have $lock=\bot$, a contradiction.
\end{proof}

\begin{theorem}
If all correct nodes have the same input, then only that value can be decided.  
\end{theorem}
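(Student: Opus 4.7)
The plan is to derive this theorem as a short corollary of Lemma~\ref{lem:bft-validity-2}. The key observation is that, in Algorithm~\ref{alg:gps-bft}, a correct node commits a value $val'$ only if a set $C$ of $n-f$ matching \textsc{Vote-2}$(v, val')$ messages exists in some view $v$, since both the direct commit rule and the \textsc{Commit}$(C)$ fast path carry exactly such a certificate in their payload. So the entire task reduces to ruling out the existence of any such \textsc{Vote-2} quorum for $val' \neq val$.

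First I would use the standing assumption $n \geq 2f+1$ to argue that at least $n-2f \geq 1$ of those $n-f$ \textsc{Vote-2}$(v, val')$ messages must originate from correct nodes, because Byzantine nodes alone, being at most $f < n-f$, cannot fabricate a commit certificate on their own. I would then trace the protocol backwards: a correct node sends \textsc{Vote-2}$(v, val')$ only after having itself assembled a lock certificate for $val'$ in view $v$, i.e., $n-f$ matching \textsc{Vote-1}$(v, val')$ messages. Hence, if any value $val'$ is ever decided by a correct node, a lock certificate for $val'$ must have formed in some view $v \geq 1$.

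Finally I would invoke Lemma~\ref{lem:bft-validity-2}, which already rules out the formation of any lock for $val' \neq val$ in any view $v \geq 0$ under the hypothesis that all correct nodes share the input $val$. Chaining this with the previous step forces $val' = val$, completing the proof. I do not expect any significant obstacle; the one subtlety worth flagging is the Byzantine quorum-size argument, which is needed precisely so that the chain of reasoning actually reaches a correct node that executed the locking step, rather than stalling at a \textsc{Vote-2} message that could, in isolation, have been forged by the $f$ faulty parties.
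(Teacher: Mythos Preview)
Your proposal is correct and follows the same approach as the paper: both invoke Lemma~\ref{lem:bft-validity-2} to conclude that every lock is for $val$, and then observe that only locked values can be committed. The paper compresses your Vote-2/quorum-intersection argument into the single sentence ``Only locked values can be decided,'' whereas you spell out why that holds (at least one correct node in any $n-f$ \textsc{Vote-2} quorum must have formed a lock); otherwise the arguments are identical.
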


\begin{proof}
By lemma \ref{lem:bft-validity-2}, any lock must be for $val$, the input of the correct nodes. Only locked values can be decided. Validity then follows from termination.
\end{proof}

\end{document}